\def\calC{\mathcal{C}}
\def\calU{\mathcal{U}}
\def\calD{\mathcal{D}}
\def\calM{\mathcal{M}}
\def\calN{\mathcal{N}}
\newtheorem{observation}{Observation}
\newtheorem{lemma}{Lemma}
\newtheorem{theorem}{Theorem}
\newenvironment{proof}{\noindent {\textbf{Proof:}}\rm}{\hfill $\Box$
\rm\bigskip}
\title{Computing the Minimum Bottleneck Moving Spanning Tree\thanks{This research was supported in part by NSF under Grant CCF-2005323.  A preliminary version of this paper will appear in Proceedings of the 47th International Symposium on Mathematical Foundations of Computer Science (MFCS 2022).}}
\author{
Haitao Wang\thanks{Department of Computer Science,
Utah State University, Logan, UT 84322, USA. {\tt haitao.wang@usu.edu}}
\and
Yiming Zhao\thanks{Corresponding author. Department of Computer Science,
Utah State University, Logan, UT 84322, USA. {\tt yiming.zhao@usu.edu}}
}
\begin{document}
\pagestyle{plain}
\date{}

\thispagestyle{empty}
\maketitle

\begin{abstract}
Given a set $P$ of $n$ points that are moving in the plane, we consider the problem of computing a spanning tree for these moving points that does not change its combinatorial structure during the point movement. The objective is to minimize the bottleneck weight of the spanning tree (i.e., the largest Euclidean length of all edges) during the whole movement. The problem was solved in $O(n^2)$ time previously [Akitaya, Biniaz, Bose, De Carufel, Maheshwari, Silveira, and Smid, WADS 2021]. In this paper, we present a new algorithm of $O(n^{4/3} \log^3 n)$ time.
\end{abstract}


\section{Introduction}
\label{sec:introduction}

Given a set $P$ of $n$ points in the plane, let $G_P$ be the complete graph whose vertex set is $P$ such that the weight of each edge connecting two points $p$ and $q$ of $P$ is the Euclidean distance between $p$ and $q$. The {\em Euclidean minimum spanning tree (EMST)} of $P$ is the spanning tree of $G_P$ with minimum sum of edge weights. The {\em Euclidean minimum bottleneck spanning tree (EMBST)} of $P$ is the spanning tree of $G_P$ whose largest edge weight is minimized. It is well known that a Delaunay triangulation of $P$ contains an EMST of $P$~\cite{ref:PreparataCo85} and thus an EMST of $P$ can be computed in $O(n\log n)$ time by constructing a Delaunay triangulation of $P$ first. This is also the case for the bottleneck problem.

In this paper, motivated by visualizations of time-varying spatial data~\cite{AkitayaTh21}, we consider
a moving version of the EMBST problem where every point of $P$ is moving during a time interval. Without loss of generality, we assume that the time interval is $[0, 1]$. A \emph{moving point} $p\in P$ is a continuous function $p: [0, 1] \rightarrow \mathbb{R}^2$. Let $p(t)$ denote the location of $p$ at time $t \in [0, 1]$. We assume that $p$ moves on a straight line segment with a constant velocity, i.e., $p(t)$ is linear in $t$ and points of $\{p(t) |\ t \in [0, 1]\}$ form a straight line segment in the plane (see Fig.~\ref{fig:MovingPoints}; different points may have different velocities). A \emph{moving spanning tree} $T$ of $P$ connects all points of $P$ and does not change its connection during the whole time interval (i.e., for any two points $p, q \in P$, the path connecting $p$ and $q$ in $T$ always contains the same set of edges). We use $T(t)$ to denote the tree at the time $t$.
The \emph{instantaneous bottleneck} $b_T(t)$ at time $t$ is the maximum length of all edges in $T(t)$. The \emph{bottleneck} $b(T)$ of the moving spanning tree $T$ is defined to be the maximum instantaneous bottleneck during the whole time interval, i.e., $b(T) = \max_{t \in [0, 1]} b_T(t)$.
The \emph{Euclidean minimum bottleneck moving spanning tree} (or {\em moving-EMBST} for short) $T^*$ refers to the moving spanning tree of $P$ with minimum bottleneck.

\begin{figure}[t]
    \centering
    \begin{minipage}[t]{\textwidth}
    \centering
    \includegraphics[height=2.0in]{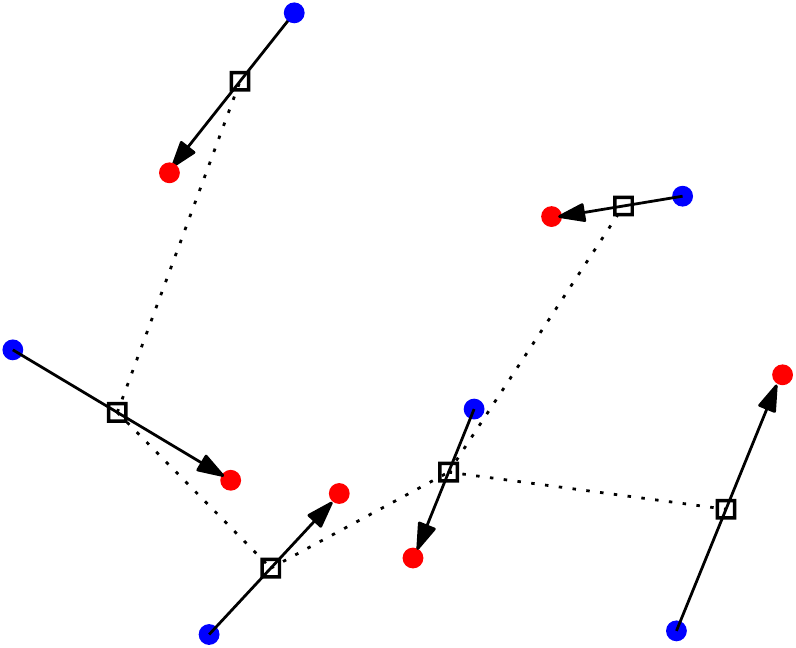}
    \caption{Each pair of red and blue points connected by a black arrow represents a moving point. Blue points denote locations at $t = 0$ and red points are locations at $t = 1$.
    Black boxes are locations of these moving points at certain time and the dashed segments form a spanning tree.}
    \label{fig:MovingPoints}
    \end{minipage}
\end{figure}

In this paper, we study the problem of computing the moving-EMBST $T^*$ for a set $P$ of $n$ moving points in the plane as defined above. Previously, this problem was solved in $O(n^2)$ time by Akitaya, Biniaz, Bose, De Carufel, Maheshwari, Silveira, and Smid~\cite{AkitayaTh21}.
To solve the problem, the authors of \cite{AkitayaTh21} first proved the following {\em key property}: The function of the distance between two moving points over time is convex (this is because each point moves linearly with constant velocity), implying that the maximum distance between two moving points is achieved at $t=0$ or $t=1$ (note that this does not mean $T^*$ is attained at either $t=0$ or $t=1$; a counterexample is provided in~\cite{AkitayaTh21}). Using the above property, the authors of \cite{AkitayaTh21} proposed the following simple algorithm to compute $T^*$. First, compute a complete graph $G$ with $P$ as the vertex set such that the weight of each edge connecting two points $p$ and $q$ of $P$ is defined as the maximum length of their distances at $t=0$ and at $t=1$.
Then the authors of \cite{AkitayaTh21} showed that a minimum bottleneck spanning tree (MBST) of $G$ is also a moving-EMBST of $P$ and thus it suffices to compute an MBST in $G$. Since an MBST of a graph can be computed in linear time in the graph size~\cite{CameriniTh78}, the entire algorithm for computing $T^*$ runs in $O(n^2)$ time in total~\cite{AkitayaTh21}.

\subsection{Our result}
\label{subsec:result}
We present an algorithm of $O(n^{4/3}\log^3 n)$ time to compute $T^*$. We sketch the main idea below.

For any two points $p$ and $q$ in the plane, let $|pq|$ denote their Euclidean distance.
Due to the above key property from \cite{AkitayaTh21}, we observe that $b(T^*)$ must be equal to $|pq|_{\max}$ for two moving points $p$ and $q$ of $P$, where $|pq|_{\max}=\max\{|p(0)q(0)|,|p(1)q(1)|\}$, i.e., $b(T^*)\in \{ |pq|_{\max}\ | \ p, q \in P\}$.
As such, our main idea is to find $b(T^*)$ in $\{|pq|_{\max}\ | \ p, q \in P\}$ by binary search. To this end, we first solve a {\em decision problem}: Given any value $\lambda > 0$, decide whether $b(T^*) \leq \lambda$. We reduce the decision problem to the problem of finding a common spanning tree in two unit-disk graphs. Specifically, the {\em unit-disk graph} $G_{\lambda}(Q)$ for a set $Q$ of points in the plane with respect to a parameter $\lambda$ is an undirected graph whose vertex set is $Q$ such that an edge connects two points $p, q \in Q$ if $|pq| \leq \lambda$ (alternatively, $G_{\lambda}(Q)$ can be viewed as the intersection graph of the set of congruous disks centered at the points of $Q$ with radius $\lambda/2$, i.e., two vertices are connected if their disks intersect; see Fig.~\ref{fig:UnitDiskGraph}).
Observe that $b(T^*) \leq \lambda$ if and only if the unit-disk graph $G_{\lambda}(P)$ for $P$ at time $t=0$ and the unit-disk graph $G_{\lambda}(P)$ for $P$ at time $t=1$ share a common spanning tree. To determine whether the two unit-disk graphs share a common spanning tree, we apply breadth-first-search (BFS) on the two graphs simultaneously. To avoid quadratic time, we do not compute these unit-disk graphs explicitly. Instead, we use a batched range searching technique of Katz and Sharir~\cite{ref:KatzAn97} to obtain a compact representation for searching one graph. For searching the other graph, we derive a semi-dynamic data structure for the following {\em deletion-only unit-disk range emptiness query} problem: Preprocess a set $Q$ of $n$ points in the plane with respect to $\lambda$ so that the following two operations can be performed efficiently: (1) given a query point $p$, determine whether $Q$ has a point $q$ such that $|pq|\leq \lambda$, and if yes, return such a point $q$; (2) delete a point from $Q$. We refer to the first operation as {\em unit-disk range emptiness query} (or UDRE query for short). We build a data structure of $O(n)$ space in $O(n\log n)$ time such that each UDRE query can be answered in $O(\log n)$ time while each deletion can be performed in $O(\log n)$ amortized time. This result might be interesting in its own right. Combining this result with the batched range searching~\cite{ref:KatzAn97}, we implement the BFS simultaneously on the two unit-disk graphs in $O(n^{4/3}\log^2 n)$ time, which solves the decision problem.

Next, equipped with the above decision algorithm, we find $b(T^*)$ from the set $\{|pq|_{\max} \ | \ p, q \in P\}$ by binary search. Computing the set explicitly would take $\Omega(n^2)$ time. We avoid doing so by resorting to the distance selection algorithm of Katz and Sharir~\cite{ref:KatzAn97}, which can compute the $k$-th smallest distance among all interpoint distances of a set of $n$ points in the plane in $O(n^{4/3}\log^2 n)$ time for any $k$ with $1\leq k\leq \binom{n}{2}$. Combining with our decision algorithm, $b(T^*)$ can be computed in $O(n^{4/3}\log^3n)$ time. Applying the value $\lambda=b(T^*)$ to the decision algorithm can produce the optimal spanning tree $T^*$ in additional $O(n^{4/3}\log^2 n)$ time.

\begin{figure}[t]
    \centering
    \begin{minipage}[t]{\textwidth}
    \centering
    \includegraphics[height=1.5in]{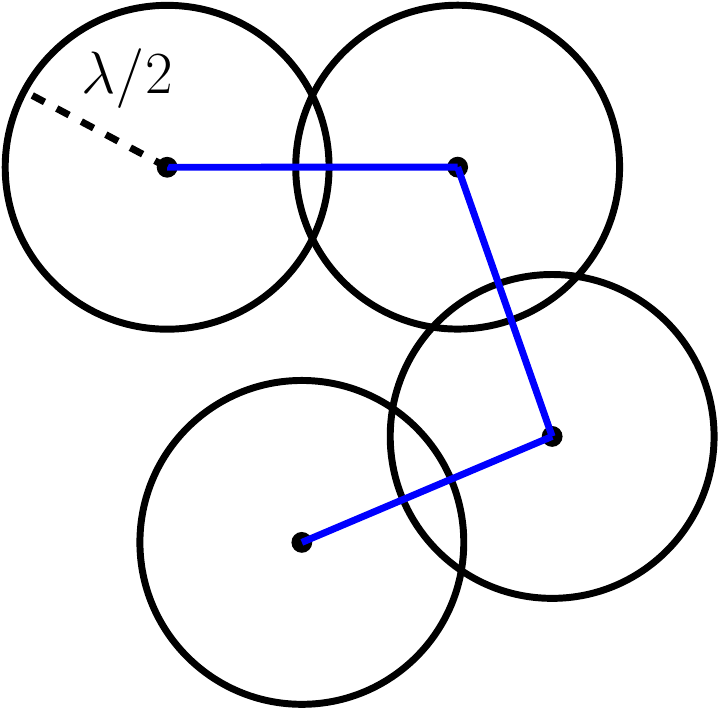}
    \caption{Illustrating a unit-disk graph. Two points are connected (by a blue segment) if their distance is less than or equal to $\lambda$. In other words, two points are connected if congruent disks centered at them with radius $\lambda / 2$ intersect.}
    \label{fig:UnitDiskGraph}
    \end{minipage}%
\end{figure}

\subsection{Related work}
\label{subsec:RelatedWork}

Similar to the moving-EMBST problem, one can consider the Euclidean minimum moving
spanning tree (moving-EMST) for a set of moving points (i.e., minimizing the
total sum of the edge weights instead).
The authors of \cite{AkitayaTh21} proved that the moving-EMST problem is
NP-hard and they gave an $O(n^2)$ time
2-approximation algorithm and another $O(n \log n)$ time $(2 +
\epsilon)$-approximation algorithm for any $\epsilon>0$.
These spanning tree problems for moving points are relevant in the realm of moving networks that is motivated by the increase in mobile data consumption and the network architecture containing mobile nodes~\cite{AkitayaTh21}.

Geometric problems for moving objects have
been studied extensively in the literature, e.g.,~\cite{AtallahDy83,ref:BaschDa99}.
In particular, kinetic data structures were proposed to maintain the minimum
spanning tree for moving points in
the plane~\cite{AtallahDy83,RahmatiKi12}. Different from our problem, research in this domain
focuses on bounds of the number of combinatorial changes in the minimum spanning tree during the point movement~\cite{ref:BaschDa99}.


For solving the deletion-only UDRE query problem, by the standard lifting transformation, one can reduce the problem to maintaining the lower envelope of a dynamic set of planes in $\mathbb{R}^3$, which has been extensively studied~\cite{ref:AgarwalDy95,ref:ChanA10,ref:EppsteinDy95,ref:KaplanDy17}. Applying Chan's recent work~\cite{ref:ChanDy20} for the problem can achieve the following result: With $O(n\log n)$ preprocessing time, each UDRE query can be answered in $O(\log^2 n)$ time and each point deletion can be handled in $O(\log^4 n)$ amortized time (the data structure is actually fully-dynamic and can also handle each point insertion in $O(\log^2 n)$ amortized time). The same problem in 2D (whose dual problem becomes maintaining the convex hull for a dynamic set of points) is easier and has also been studied extensively, e.g.,~\cite{ref:BrodalDy02,ref:ChanDy01,HershbergerAp92,ref:OvermarsMa81}. In addition, Wang~\cite{ref:WangUn22} studied the unit-disk range counting query problem for a static set of points in the plane, by extending the techniques for half-plane range counting query problem~\cite{ref:ChanOp12,ref:MatousekEf92,ref:MatousekRa93}.

Our algorithm for the decision problem uses some techniques for unit-disk graphs.
Many problems on unit-disk graphs have been studied, i.e., shortest
paths and reverse shortest paths~\cite{ref:CabelloSh15, ref:WangNe20, ref:ChanAl16,ref:ChanAp18, ref:WangAn21, ref:WangRe21, ref:WangRe22}, clique~\cite{ref:ClarkUn90}, independent
set~\cite{ref:MatsuiAp98}, diameter~\cite{ref:ChanAl16, ref:ChanAp18,
ref:GaoWe05}, etc. Although a unit-disk graph of $n$ vertices may have
$\Omega(n^2)$ edges, many problems can be solved in subquadratic time by
exploiting its underlying geometric structures, e.g., computing shortest
paths~\cite{ref:CabelloSh15, ref:WangNe20}. Our $O(n^{4/3}\log^2 n)$ time
algorithm for finding a common spanning tree in two unit-disk graphs adds one
more problem to this category.

\paragraph{Outline.}
In the following, we present our algorithm for the moving-EMBST problem in
Section~\ref{sec:embst}. The algorithm uses our data structure for the
deletion-only unit-disk range emptiness query problem, which is given in
Section~\ref{sec:diskempty}. Section~\ref{sec:Conclusion} concludes.


\section{Algorithm for moving-EMBST}
\label{sec:embst}

We follow the notation in Section~\ref{sec:introduction}, e.g., $P$, $t$, $b(T)$,
$b_T(t)$, $T^*$, $|pq|$, $|pq|_{\max}$, $G_{\lambda}(P)$, etc. Given a set $P$ of $n$ points in the plane, our
goal is to compute $b(T^*)$. As discussed in Section~\ref{subsec:result}, we
first consider the decision problem: Given any $\lambda>0$, decide whether $b(T^*)
\leq \lambda$. We refer to the original problem for computing $b(T^*)$ as the
{\em optimization problem}.
In what follows, we solve the decision problem in Section~\ref{subsec:decision} and the algorithm for the optimization problem is described in Section~\ref{subsec:opt}.


\subsection{The decision problem}
\label{subsec:decision}

Given any $\lambda>0$, the decision problem is to decide whether $b(T^*) \leq
\lambda$.

For any time $t\in [0,1]$, we use $P(t)$ to denote the set of points
of $P$ at their locations at time $t$, i.e., $P(t)=\{p(t)\ | \ p\in P\}$.
Consider the two unit-disk graphs $G_{\lambda}(P(0))$ and $G_{\lambda}(P(1))$. To simplify the notation, we use $G_{\lambda}(t)$ to refer to $G_{\lambda}(P(t))$ for any $t\in [0,1]$.
For every point $p\in P$, we consider $p(0)$ in $G_{\lambda}(0)$ and $p(1)$ in $G_{\lambda}(1)$ as the same vertex $p$, and thus define $G_{\lambda}=G_{\lambda}(0)\cap G_{\lambda}(1)$ as the {\em intersection graph} of $G_{\lambda}(0)$ and $G_{\lambda}(1)$, i.e., the vertex set of $G_{\lambda}$ is $P$ and $G_{\lambda}$ has an edge connecting two vertices $p$ and $q$ if and only $G_{\lambda}(0)$ has an edge connecting $p(0)$ and $q(0)$ and $G_{\lambda}(1)$ has an edge connecting $p(1)$ and $q(1)$. A spanning tree of $G_{\lambda}$ is called a {\em common spanning tree} of $G_{\lambda}(0)$ and $G_{\lambda}(1)$.

The following observation has been proved in \cite{AkitayaTh21}.

\begin{observation}\label{obser:convex}{\em (\cite{AkitayaTh21})}
$\max_{t\in [0,1]}|p(t)q(t)|=\max\{|p(0)q(0)|,|p(1)q(1)|\}$ holds for every pair of points $p,q\in P$.
\end{observation}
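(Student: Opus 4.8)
The plan is to prove the convexity claim directly, since Observation~\ref{obser:convex} is an immediate corollary of the fact that $t\mapsto |p(t)q(t)|$ is a convex function on $[0,1]$: a convex function on an interval attains its maximum at an endpoint of that interval, so $\max_{t\in[0,1]}|p(t)q(t)| = \max\{|p(0)q(0)|,|p(1)q(1)|\}$. Thus the entire task reduces to showing the distance function is convex.

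To see the convexity, I would write $p(t) = p(0) + t\cdot u$ and $q(t) = q(0) + t\cdot v$, where $u = p(1)-p(0)$ and $v = q(1)-q(0)$ are the (constant) displacement vectors, using the hypothesis that each point moves on a line at constant velocity so that $p(t)$ and $q(t)$ are affine in $t$. Then $p(t)-q(t) = (p(0)-q(0)) + t(u-v)$ is an affine (vector-valued) function of $t$; write it as $a + tw$ with $a = p(0)-q(0)$ and $w = u-v$. Hence $|p(t)q(t)| = \|a+tw\|$, the Euclidean norm of an affine map. The key step is that the composition of a convex function (the norm $\|\cdot\|$ on $\mathbb{R}^2$, which is convex by the triangle inequality and positive homogeneity) with an affine map ($t\mapsto a+tw$) is convex; concretely, for $t_1,t_2\in[0,1]$ and $\alpha\in[0,1]$, $\|a+(\alpha t_1+(1-\alpha)t_2)w\| = \|\alpha(a+t_1w)+(1-\alpha)(a+t_2w)\| \le \alpha\|a+t_1w\| + (1-\alpha)\|a+t_2w\|$ by the triangle inequality. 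This establishes convexity of $t\mapsto|p(t)q(t)|$ on $[0,1]$.

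Finally, I would invoke the elementary fact that a convex function $f$ on $[0,1]$ satisfies $f(t)\le \max\{f(0),f(1)\}$ for all $t\in[0,1]$ --- indeed $t = t\cdot 1 + (1-t)\cdot 0$, so $f(t)\le t f(1) + (1-t) f(0) \le \max\{f(0),f(1)\}$ --- and the reverse inequality $\max_{t\in[0,1]} f(t) \ge \max\{f(0),f(1)\}$ is trivial since $0,1\in[0,1]$. Applying this to $f(t)=|p(t)q(t)|$ yields the claimed identity. There is no real obstacle here; the only thing worth being careful about is the affine (not merely linear) form of $p(t)$ and $q(t)$, so that the displacement vector $w = (p(1)-p(0))-(q(1)-q(0))$ is genuinely constant in $t$ --- this is exactly where the constant-velocity assumption is used, and it is what makes $p(t)-q(t)$ affine and hence its norm convex.
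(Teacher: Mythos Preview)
Your proof is correct and follows exactly the approach the paper indicates: the paper does not give its own proof of this observation but cites \cite{AkitayaTh21}, noting in the introduction that ``the function of the distance between two moving points over time is convex (this is because each point moves linearly with constant velocity), implying that the maximum distance between two moving points is achieved at $t=0$ or $t=1$.'' Your argument fleshes out precisely this convexity-then-endpoint-maximum reasoning.
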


Using the above observation, the following lemma reduces the decision problem to the problem of finding a common spanning tree of $G_{\lambda}(0)$ and $G_{\lambda}(1)$.

\begin{lemma}
    \label{lem:DecisionProblem}
    Given any $\lambda>0$, $b(T^*) \leq \lambda$ if and only if $G_{\lambda}(0)$ and $G_{\lambda}(1)$ have a common spanning tree.
\end{lemma}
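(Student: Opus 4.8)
The plan is to prove the biconditional directly, establishing each implication separately and relying on Observation~\ref{obser:convex} as the bridge between the continuous-time bottleneck condition and the two static unit-disk graphs.

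First I would handle the forward direction: assume $b(T^*)\le\lambda$ and produce a common spanning tree. Let $T^*$ be a moving-EMBST of $P$, and let $T$ denote its (fixed) combinatorial tree structure on vertex set $P$. For any edge $\{p,q\}$ of $T$, since $b(T^*)=\max_{t\in[0,1]}b_{T^*}(t)\le\lambda$, in particular $|p(t)q(t)|\le\lambda$ for all $t\in[0,1]$, so $|p(0)q(0)|\le\lambda$ and $|p(1)q(1)|\le\lambda$. Hence $\{p,q\}$ is an edge of $G_{\lambda}(0)$ and of $G_{\lambda}(1)$, i.e., an edge of $\g=G_{\lambda}(0)\cap G_{\lambda}(1)$. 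Since this holds for every edge of $T$ and $T$ spans $P$, $T$ is a spanning tree of $\g$, which by definition is a common spanning tree of $G_{\lambda}(0)$ and $G_{\lambda}(1)$.

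Next I would handle the reverse direction: assume $G_{\lambda}(0)$ and $G_{\lambda}(1)$ have a common spanning tree $T$, and show $b(T^*)\le\lambda$. View $T$ as a moving spanning tree of $P$ (its combinatorial structure is fixed over time). For each edge $\{p,q\}$ of $T$, membership in $\g$ gives $|p(0)q(0)|\le\lambda$ and $|p(1)q(1)|\le\lambda$; now apply Observation~\ref{obser:convex} to conclude $\max_{t\in[0,1]}|p(t)q(t)|=\max\{|p(0)q(0)|,|p(1)q(1)|\}\le\lambda$. Taking the maximum over all edges of $T$ and all $t$, we get $b(T)=\max_{t\in[0,1]}b_T(t)\le\lambda$. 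Since $T^*$ is the moving spanning tree of minimum bottleneck, $b(T^*)\le b(T)\le\lambda$, as desired.

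The argument is essentially bookkeeping once Observation~\ref{obser:convex} is in hand; there is no genuinely hard step. The one place to be slightly careful is the reverse direction: I must make sure that a static common spanning tree of the two graphs is legitimately interpreted as a valid \emph{moving} spanning tree (the combinatorial connection structure does not vary with $t$, which is exactly the definition of a moving spanning tree), and then that the convexity observation lets us bound the bottleneck over the entire interval rather than only at the two endpoints $t=0$ and $t=1$. That is precisely the role of Observation~\ref{obser:convex}, so the proof goes through cleanly.
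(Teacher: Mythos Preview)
Your proof is correct and essentially identical to the paper's own argument: both directions are handled in the same way, with Observation~\ref{obser:convex} used to pass from the endpoint distances to the full interval and the optimality of $T^*$ giving $b(T^*)\le b(T)$. The only cosmetic difference is that you present the two implications in the opposite order from the paper.
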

\begin{proof}
    \label{proof:lem-DecisionProblem}
    Suppose $G_{\lambda}(0)$ and $G_{\lambda}(1)$ have a common spanning tree $T$ in $G_{\lambda}$.
    Then for any edge of $T$ connecting two points $p,q\in P$, since the edge appears in both $G_{\lambda}(0)$ and $G_{\lambda}(1)$, it holds that $|p(0) q(0)|\leq \lambda$ and $|p(1) q(1)|\leq \lambda$, and thus $\max \{|p(0) q(0)|, |p(1) q(1)|\} \leq \lambda$. By Observation~\ref{obser:convex}, we have $b(T) = \max_{t \in [0, 1]} b_T(t) \leq \lambda$. Since $b(T^*) \leq b(T)$ by the definition of $T^*$, we obtain $b(T^*) \leq \lambda$.

    Now suppose $b(T^*) \leq \lambda$. We argue that $T^*$ must be a common spanning tree of $G_{\lambda}(0)$ and $G_{\lambda}(1)$. Indeed, since $b(T^*) = \max_{t \in [0, 1]} b_{T^*}(t) \leq \lambda$, for any edge of $T^*$ connecting two points $p,q\in P$, $|p(t)q(t)| \leq \lambda$ for any $t\in [0,1]$, and in particular, $|p(0)q(0)| \leq \lambda$ and $|p(1)q(1)| \leq \lambda$, implying that $G_{\lambda}(0)$ has an edge connecting $p$ and $q$ and so does $G_{\lambda}(1)$. As such, $T^*$ must be a common spanning tree of $G_{\lambda}(0)$ and $G_{\lambda}(1)$.
\end{proof}

In light of Lemma~\ref{lem:DecisionProblem}, to solve the decision problem, it suffices to determine whether $G_{\lambda}(0)$ and $G_{\lambda}(1)$ have a common spanning tree, or alternatively, whether the intersection graph $G_{\lambda}$ has a spanning tree, which is true if and only if the graph is connected. To determine whether $G_{\lambda}$ is connected, we perform a breadth-first search (BFS) in $G_{\lambda}$, or equivalently, we perform a BFS on $G_{\lambda}(0)$ and $G_{\lambda}(1)$ simultaneously; we do so without computing the two unit-disk graphs explicitly to avoid the quadratic time. Our algorithm relies on the following lemma for the deletion-only UDRE query problem, which will be proved in Section~\ref{sec:diskempty}.

\begin{theorem}\label{theo:query}
Given a value $\lambda$ and a set $Q$ of $n$ points in the plane, we can build a
data structure of $O(n)$ space in $O(n\log n)$ time such that the following
first operation can be performed in $O(\log n)$ worst case time while the second
operation can be performed in $O(\log n)$ amortized time.
\begin{enumerate}
\item
{\em Unit-disk range emptiness (UDRE) query}: Given a point $p$, determine whether there
exists a point $q\in Q$ such that $|pq|\leq \lambda$, and if yes, return such a
point $q$.
\item
{\em Deletion}: delete a point from $Q$.
\end{enumerate}
\end{theorem}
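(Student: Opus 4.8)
The plan is to build a grid-based structure together with a deletion-only convex-hull structure inside each cell. Fix a grid $\Psi$ of axis-parallel square cells of side length $\lambda/2$, so that (i) any two points in a common cell are at distance at most the diagonal $\lambda/\sqrt{2}<\lambda$, and (ii) the disk $D(p,\lambda)$ of radius $\lambda$ around any point $p$ meets only $O(1)$ cells. With each non-empty cell we store the list of points of $Q$ it contains, indexed by a hash table on integer cell coordinates; this uses $O(n)$ space and is built in $O(n\log n)$ time (or $O(n)$ with hashing and bucketing). A deletion of $q$ is routed to the unique cell containing $q$ and performed inside that cell's secondary structure described below.

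To answer a UDRE query for $p$, first locate the cell $C$ containing $p$. If $C$ holds a point of $Q$, return it: by (i) it is within distance $\lambda$ of $p$, and we are done in $O(1)$ time. Otherwise let $\calC$ be the set of $O(1)$ cells that meet $D(p,\lambda)$; for each $C'\in\calC$ we must decide whether $Q\cap C'$ has a point in $D(p,\lambda)$ and, if so, report one. If $C'$ lies entirely inside $D(p,\lambda)$, a non-emptiness test settles it in $O(1)$ time. Otherwise $\partial D(p,\lambda)$ crosses $C'$, and we cut $C'$ by the line $x=p_x$ and by the line $y=p_y$ (each cut is needed only when $C'$ shares $p$'s column, resp.\ row, so $C'$ splits into $O(1)$ pieces). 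Inside any resulting piece all points of $Q\cap C'$ lie in one fixed quadrant relative to $p$; hence the part of $\partial D(p,\lambda)$ inside the piece is an $x$- and $y$-monotone circular arc, and the part of the piece inside $D(p,\lambda)$ is closed under decreasing both coordinates in the quadrant's orientation. A short domination argument then shows that $Q\cap(\text{piece})$ meets $D(p,\lambda)$ if and only if the staircase of coordinatewise-minimal points of $Q\cap(\text{piece})$ does. Thus the whole query reduces to $O(1)$ ``monotone-arc range-emptiness'' subqueries against the point sets of $O(1)$ cells.

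For the dynamic side I would maintain with each cell a deletion-only structure on the convex hull (equivalently, the minimal-staircase chain) of that cell's points, obtained from a deletion-only planar convex-hull structure such as that of Hershberger and Suri~\cite{HershbergerAp92}, which handles a sequence of $n$ deletions from an $n$-point set in $O(n\log n)$ total time ($O(\log n)$ amortized each) and $O(n)$ space and supports $O(\log n)$-time binary-search queries on the current hull. On top of it I would answer each monotone-arc range-emptiness subquery in $O(\log n)$ worst-case time by a binary search exploiting the convexity of $D(p,\lambda)$, the monotonicity of the arc, and crucially that the arc lies on a circle of the fixed radius $\lambda$, so that ``the convex chain enters this staircase-closed region'' can be certified or refuted in $O(\log n)$ steps. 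Summing up: a query touches $O(1)$ cells and $O(1)$ subqueries per cell, giving $O(\log n)$ worst-case query time; a deletion touches a single cell's structure, giving $O(\log n)$ amortized time; the space is $O(n)$ and the preprocessing is $O(n\log n)$, as required.

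The main obstacle is the monotone-arc range-emptiness subquery on a dynamically shrinking set in $O(\log n)$ time. Two tempting shortcuts fail: (a) walking the minimal staircase and returning its point closest to $p$ is not binary-searchable, since for a point set confined to one cell the squared distances to $p$ along the staircase need not be unimodal; and (b) merely testing whether the cell's convex hull intersects $D(p,\lambda)$ is too weak, because a hull edge can dip into the disk while no hull vertex does, so no actual point of $Q$ need be reported there. The resolution is precisely the localization plus fixed radius: within one quadrant-restricted cell, $D(p,\lambda)$ induces a staircase-closed convex region, and this extra structure is what lets a binary search over the relevant convex chain decide the subquery correctly and interface cleanly with the deletion-only hull machinery. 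This is also what lets us beat the bounds one would get by lifting to $\bbR^3$ and maintaining the lower envelope of a dynamic set of planes~\cite{ref:ChanDy20}.
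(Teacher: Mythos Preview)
Your high-level plan---grid decomposition plus a Hershberger--Suri-style deletion-only structure inside each cell---matches the paper's architecture, and you correctly identify that the fixed radius $\lambda$ is what should make the per-cell subproblem tractable. But the proposal has a genuine gap at exactly the point you flag as ``the main obstacle'': you never actually say what object you maintain or how the binary search works, and the phrase ``convex hull (equivalently, the minimal-staircase chain)'' conflates two different things. Your own obstacle~(b) already shows the convex hull of $Q(C')$ is the wrong object (a hull edge can enter $D(p,\lambda)$ with no vertex inside), and obstacle~(a) shows the staircase does not support the needed binary search. Asserting that ``localization plus fixed radius'' resolves both is not a proof; nothing in the proposal establishes the structural property that makes Hershberger--Suri transplantable, namely that the relevant envelope is a sequence in which every contributing element appears exactly once and in an order fixed in advance. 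A second, smaller issue: your cuts by $x=p_x$ and $y=p_y$ are query-dependent, so a single per-cell structure would additionally have to support half-plane-restricted queries, which you do not address.

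The paper closes the gap by changing the maintained object. For each ordered pair of neighboring cells $(C,C')$ separated by a fixed axis-parallel line (so no query-time cutting is needed), it maintains the lower envelope $\calU$ of the arcs $\xi_q=\partial A_q\cap C$, $q\in Q(C')$, where $A_q$ is the radius-$\lambda$ disk centered at $q$. A query point $p\in C$ hits some $A_q$ iff $p$ lies above $\calU$, which is $x$-monotone, so the query is a single binary search. The crucial lemma (the analogue of ``each line appears once on a lower envelope of lines, in slope order'') is that every arc $\xi_q$ contributes at most one piece to $\calU$ and the pieces appear in the order of the arcs' right endpoints along $\partial C$; this uses that all circles have the same radius $\lambda$. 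From this it follows that the two children's envelopes at any node intersect at most once, which is exactly what Hershberger--Suri needs. That lemma is the missing idea in your proposal; once you have it, the rest (the interval-tree representation with $arcs(\cdot)$ and $X(\cdot)$, the $O(\log n)$ query, and the amortized $O(\log n)$ deletion) goes through just as in the line case.
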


In the following, we begin with an algorithm overview and then flesh out the details.

\paragraph{Algorithm overview.}
Starting from an arbitrary point $s\in P$, we run BFS in the graph $G_{\lambda}$. For each
$i=0,1,2,\ldots$, let $P_i$ be the
set of points whose shortest path lengths from $s$ in $G_{\lambda}$ are equal to $i$. In each $i$-th iteration, the algorithm computes $P_i$. Initially, $P_0=\{s\}$. The algorithm stops once we have
$P_i=\emptyset$, after which we check whether all points of $P$ have been discovered.
If yes, then the BFS tree is a spanning tree of $G_{\lambda}$; otherwise, $G_{\lambda}$ is not
connected. Consider the $i$-th iteration. Suppose $P_{i-1}$ is already known.
For each point $p\in P_{i-1}$, we wish to find the set $S(p)$ of all points $q\in P$ such that (1)
$q$ has not been discovered yet, i.e., $q\not\in \bigcup_{j=0}^{i-1} P_j$; (2)
$|p(0)q(0)|\leq \lambda$; (3) $|p(1)q(1)|\leq \lambda$. To implement this step
efficiently, we use two techniques. First, we use a batched range searching technique of Katz and Sharir~\cite{ref:KatzAn97} to obtain a compact representation of all points of $P(0)$. The compact representation can provide us with a collection $\calN(p)$ of canonical subsets of $P$ whose union is exactly the subset of points $q$ of $P$ such that $|p(0)q(0)|\leq \lambda$. Second, for each subset $Q$ of $\calN(p)$, a data structure of Theorem~\ref{theo:query} is constructed for $Q(1)=\{q(1)\ |\ q\in Q\}$, i.e., the set of points of $Q$ at their locations at time $t=1$. Then, we apply the UDRE query with $p(1)$ as the query point; if the query returns a point $q(1)$, then we know that $q$ is in $S(p)$ and we delete $q$ from $Q$ (we also delete $q$ from other canonical subsets of the compact representation that contain $q$; the deletion guarantees that points of $P$ already discovered by the BFS have been removed from the canonical subsets of the compact representation) and applying the UDRE query with $p(1)$ again. We keep doing this until the UDRE query does not return any point, and then we process the next subset of $\calN(p)$ in the same way. In this way, $S(p)$ will be computed, which is a subset of $P_i$. Processing every point $p\in P_{i-1}$ as above will produce $P_i$. The details of the algorithm are given below.

\paragraph{Preprocessing.}
Before running BFS, we conduct some preprocessing work.

First, using a batched range searching technique~\cite{ref:KatzAn97}, we have the following lemma (which is essentially Theorem 3.3 in~\cite{ref:KatzAn97}) for computing a {\em compact representation} of all pairs $(p,q)$ of points of $P$ with $|p(0)q(0)|\leq \lambda$.

\begin{lemma}{\em (Theorem 3.3~\cite{ref:KatzAn97})}
    \label{lem:preproG0}
    We can compute a collection $\{X_r \times Y_r\}_r$ of complete edge-disjoint bipartite graphs in $O(n^{4/3} \log n)$ time and space, where $X_r, Y_r \subseteq P$, with the following properties.
    \begin{enumerate}
        \item For any $r$, $|p(0)q(0)|\leq \lambda$ holds for any point $p \in X_r$ and any point $q\in Y_r$.
        \item The number of these complete edge-disjoint bipartite graphs is $O(n^{4/3})$, and both $\sum_r |X_r|$ and $\sum_r |Y_r|$ are bounded by $O(n^{4/3} \log n)$.
        \item For any two points $p, q \in P$ with $|p(0)q(0)| \leq \lambda$, there exists a unique $r$ such that $p \in X_r$ and $q \in Y_r$.
    \end{enumerate}
\end{lemma}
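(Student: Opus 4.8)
The plan is to reduce the construction to a batched halfspace range searching problem in $\mathbb{R}^3$ via the standard lifting transformation, and then invoke the cutting-based construction of Katz and Sharir~\cite{ref:KatzAn97}, which is exactly their Theorem~3.3, essentially verbatim.

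First I would lift the plane to $\mathbb{R}^3$. For a point $p\in P$ with $p(0)=(a_p,b_p)$, set $\hat{p}=(a_p,b_p,a_p^2+b_p^2)$, and for a point $q\in P$ with $q(0)=(a_q,b_q)$, set
$$h_q=\{(x,y,z)\in\mathbb{R}^3 : z-2a_qx-2b_qy+(a_q^2+b_q^2-\lambda^2)\le 0\}.$$
Expanding $|p(0)q(0)|^2\le\lambda^2$ shows that $|p(0)q(0)|\le\lambda$ if and only if $\hat{p}\in h_q$. Thus computing the desired compact representation is equivalent to decomposing the incidence set $\{(p,q):\hat{p}\in h_q\}$, over the $n$ lifted points $\{\hat{p}:p\in P\}$ and the $n$ halfspaces $\{h_q:q\in P\}$, into complete edge-disjoint bipartite graphs $\{X_r\times Y_r\}_r$.

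For the latter I would run the hierarchical-cutting construction underlying Theorem~3.3 of~\cite{ref:KatzAn97}: build a $(1/r)$-cutting of the arrangement of the $n$ bounding planes of the halfspaces, and for each cell $\Delta$ of the cutting classify each halfspace $h_q$ as (i) $\Delta\subseteq h_q$, (ii) $\Delta\cap h_q=\emptyset$, or (iii) $\partial h_q$ crossing $\Delta$; the type-(i) halfspaces, paired with all lifted points lying in $\Delta$, form a complete bipartite graph that is emitted, the type-(ii) halfspaces are discarded, and the construction recurses within $\Delta$ on the type-(iii) halfspaces and the points of $\Delta$. Charging each incidence pair to the unique cell at which its halfspace first becomes type~(i) makes the emitted family edge-disjoint, giving property~3; property~1 holds by construction since $X_r$ is a set of points inside a cell and $Y_r$ a set of halfspaces containing that cell; and, choosing $r$ a suitable constant, the standard analysis of this recursion bounds the number of emitted bipartite graphs by $O(n^{4/3})$, their total size $\sum_r(|X_r|+|Y_r|)$ by $O(n^{4/3}\log n)$, and the construction time and space by $O(n^{4/3}\log n)$ --- precisely the statement of Theorem~3.3 in~\cite{ref:KatzAn97}.

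Since the lemma follows from Theorem~3.3 of~\cite{ref:KatzAn97} by this routine lifting, I expect essentially no new difficulty in this step; the only point needing (mild) care is checking that the lifting is an exact bijection on incidence pairs, so that all three properties --- in particular the edge-disjointness of property~3 --- transfer unchanged from the $\mathbb{R}^3$ formulation back to the plane. Were one instead to reprove Theorem~3.3 from scratch, the genuinely delicate part would be the choice of the cutting parameter and the bookkeeping that balances the number of recursively generated canonical subsets against their total size so as to land exactly at the $n^{4/3}$ threshold.
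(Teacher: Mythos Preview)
Your proposal is correct and matches the paper's approach: the paper does not give an independent proof but simply cites this as Theorem~3.3 of Katz and Sharir~\cite{ref:KatzAn97}, and your lifting reduction to batched halfspace incidences in $\mathbb{R}^3$ is exactly the standard route by which that theorem applies here. The extra detail you supply (the explicit lifting map and the cutting recursion) is accurate and more than the paper itself provides.
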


We refer to each $X_r$ (resp., $Y_r$) as a {\em canonical subset} of $P$.
After the collection $\{X_r \times Y_r\}_r$ is computed, we further do the following. For each point $p\in P$, if $p$ is
in $X_r$, then we add (the index of) $Y_r$ to $\calN(p)$. By Lemma~\ref{lem:preproG0}(3), subsets of $\calN(p)$ are pairwise disjoint and the union of
them is exactly the subset of points $q\in P$ with
$|p(0)q(0)|\leq \lambda$.
Similarly, for each point $p\in P$, if $p$ is in
$Y_r$, then we add (the index of) $Y_r$ to $\calM(p)$. The purpose of having
$\calM(p)$ is that after a point $p$ is identified in $P_i$, we will need to remove $p$
from all subsets $Y_r$ that contain $p$ (so $\calM(p)$ helps us to keep track
of these subsets $Y_r$).
We can compute $\calN(p)$ and $\calM(p)$ for all points $p\in P$ in $O(n^{4/3}\log n)$ time since both
$\sum_r|X_r|$ and $\sum_r|Y_r|$ are $O(n^{4/3}\log n)$ by
Lemma~\ref{lem:preproG0}(2). For the same reason, both $\sum_{p\in P}|\calN(p)|$
and $\sum_{p\in P}|\calM(p)|$ are bounded by $O(n^{4/3}\log n)$.

In addition, for each canonical subset $Y_r$, we construct the data structure of Theorem~\ref{theo:query} for $Y_r(1)=\{q(1)\ | \ q\in Y_r\}$, denoted by $\calD(Y_r)$. Since $\sum_r|Y_r|=O(n^{4/3}\log n)$, constructing the data structures for all $Y_r$ can be done in $O(n^{4/3}\log^2 n)$ time and $O(n^{4/3}\log n)$ space.

This finishes our preprocessing work, which takes $O(n^{4/3}\log^2 n)$ time in total.

\paragraph{Implementing the BFS algorithm.}
We next implement the BFS algorithm as overviewed above (we follow the same notation).

For each point $p\in P_{i-1}$, the key step is to compute the subset $S(p)$ of $P$. We implement this step as follows. For each $Y_r\in \calN(p)$, we perform a UDRE query with $p(1)$ on the data structure $\calD(Y_r)$. If the query returns a point $q(1)$, then we add $q$ to $S(p)$ and delete $q(1)$ from the data structure $\calD(Y'_r)$ for every $Y'_r\in \calM(q)$.
Next, we perform a UDRE query with $p(1)$ on $\calD(Y_r)$ again and repeat the same process as above until the query does not return any point. According to the definitions of $\calN(p)$ and $\calM(p)$ and also due to the deletions on $\calD(Y'_r)$ for all $Y'_r\in \calM(q)$, the union of $S(p)$ thus computed for all $p\in P_{i-1}$ is exactly $P_i$. This finishes the $i$-th iteration of the BFS algorithm.

For the time analysis, since both $\sum_{p\in P}|\calN(p)|$ and $\sum_{p\in
P}\calM(p)$ are $O(n^{4/3}\log n)$, the total number of UDRE queries
and deletions on the data structures
$\calD(Y_r)$ in the entire algorithm is $O(n^{4/3}\log n)$, which together
take $O(n^{4/3}\log^2 n)$ time. Therefore, the BFS algorithm runs in
$O(n^{4/3}\log^2 n)$ time.

The following theorem summarizes our result for the decision problem.

\begin{theorem}
    \label{theo:decision}
    Given any value $\lambda > 0$, we can decide whether $b(T^*) \leq \lambda$
	in $O(n^{4/3} \log^2 n)$ time, and if yes, a moving spanning tree $T$ of $P$ with
	$b(T)\leq \lambda$ can be found in $O(n^{4/3} \log^2 n)$ time.
\end{theorem}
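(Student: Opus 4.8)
The plan is to assemble the pieces already in place: Lemma~\ref{lem:DecisionProblem} reduces the decision problem to testing whether $G_{\lambda}(0)$ and $G_{\lambda}(1)$ have a common spanning tree, which holds exactly when the intersection graph $G_{\lambda}$ is connected. So it suffices to run a BFS in $G_{\lambda}$ from an arbitrary source $s$ and, at the end, check whether every point of $P$ was discovered; the BFS tree, if it spans $P$, is a moving spanning tree $T$ with $b(T)\le\lambda$ by Observation~\ref{obser:convex}. Since $G_{\lambda}$ (and each of $G_{\lambda}(0)$, $G_{\lambda}(1)$) may have $\Omega(n^2)$ edges, the whole difficulty is implementing this BFS without ever materializing these graphs.

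First I would carry out the preprocessing. Applying Lemma~\ref{lem:preproG0} to $P(0)$ gives, in $O(n^{4/3}\log n)$ time and space, an edge-disjoint family $\{X_r\times Y_r\}_r$ of complete bipartite graphs that exactly covers the pairs $(p,q)$ with $|p(0)q(0)|\le\lambda$, with $\sum_r|X_r|,\sum_r|Y_r|=O(n^{4/3}\log n)$. From this I build, for every $p\in P$, the list $\calN(p)$ of indices $r$ with $p\in X_r$ and the list $\calM(p)$ of indices $r$ with $p\in Y_r$; by Lemma~\ref{lem:preproG0}(3) the sets $\{Y_r:r\in\calN(p)\}$ partition the $G_{\lambda}(0)$-neighbors of $p$, and $\sum_p|\calN(p)|,\sum_p|\calM(p)|=O(n^{4/3}\log n)$. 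Finally, for each $Y_r$ I build the structure $\calD(Y_r)$ of Theorem~\ref{theo:query} on $Y_r(1)$; since $\sum_r|Y_r|=O(n^{4/3}\log n)$ this costs $O(n^{4/3}\log^2 n)$ time and $O(n^{4/3}\log n)$ space. Total preprocessing is $O(n^{4/3}\log^2 n)$.

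Next I would run the layered BFS. With $P_0=\{s\}$ and $P_{i-1}$ already computed, in the $i$-th iteration I process each $p\in P_{i-1}$ as follows: for each $r\in\calN(p)$ I repeatedly issue a UDRE query with the point $p(1)$ against $\calD(Y_r)$; each time it returns a point $q(1)$, I know $q\in Y_r$, so $|p(0)q(0)|\le\lambda$ (because $p\in X_r$) and $|p(1)q(1)|\le\lambda$, hence $q$ is a not-yet-discovered neighbor of $p$ in $G_{\lambda}$; I place $q$ in $P_i$ with parent $p$ and then delete $q(1)$ from $\calD(Y'_r)$ for every $r'\in\calM(q)$, which guarantees $q$ is never found again. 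I move to the next $r$ once $\calD(Y_r)$ reports emptiness. The correctness claim I would verify is that, because $\{Y_r:r\in\calN(p)\}$ partitions $p$'s $G_{\lambda}(0)$-neighbors and the deletions keep removed exactly the already-discovered vertices, scanning these $Y_r$ with UDRE queries yields precisely the undiscovered points $q$ that are also $G_{\lambda}(1)$-neighbors of $p$; so the union over $p\in P_{i-1}$ of the newly found points is exactly $P_i$, and the BFS terminates at the first empty layer.

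For the running time, each deletion of some $q(1)$ from a structure $\calD(Y'_r)$ is charged to the pair $(r',q)$ with $q\in Y_{r'}$, and each such pair is deleted at most once, so there are $O(\sum_r|Y_r|)=O(n^{4/3}\log n)$ deletions; each successful UDRE query is paid for by the deletion it triggers, and each $Y_r\in\calN(p)$ incurs at most one additional failing query, so there are $O(\sum_p|\calN(p)|)+O(n^{4/3}\log n)=O(n^{4/3}\log n)$ queries. By Theorem~\ref{theo:query} a query costs $O(\log n)$ worst case and a deletion $O(\log n)$ amortized, so the BFS runs in $O(n^{4/3}\log^2 n)$; with the preprocessing this gives the claimed bound, and recording parent pointers during the BFS produces $T$ within the same time. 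I expect the main obstacle to be getting the deletion bookkeeping exactly right --- maintaining the lists $\calM(\cdot)$ and removing each discovered vertex from \emph{all} canonical subsets that contain it --- so that the amortized deletion bound of Theorem~\ref{theo:query} applies as stated and each vertex of $G_{\lambda}$ enters exactly one BFS layer.
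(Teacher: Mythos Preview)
Your proposal is correct and follows essentially the same approach as the paper: reduce via Lemma~\ref{lem:DecisionProblem} to connectivity of $G_{\lambda}$, preprocess $P(0)$ with the Katz--Sharir bipartite decomposition of Lemma~\ref{lem:preproG0}, build $\calN(\cdot)$, $\calM(\cdot)$ and the structures $\calD(Y_r)$ of Theorem~\ref{theo:query} on $Y_r(1)$, and then run BFS by iterating over $\calN(p)$ with repeated UDRE queries and global deletions through $\calM(q)$. Your charging argument for the $O(n^{4/3}\log n)$ total queries and deletions (splitting queries into successful ones paid by deletions and at most one failing query per $(p,r)$ pair) is the same bookkeeping the paper uses, just spelled out a bit more explicitly.
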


\subsection{The optimization problem}
\label{subsec:opt}


As discussed in Section~\ref{sec:introduction}, by Observation~\ref{obser:convex}, $b(T^*)$ is equal to $|p(0)q(0)|$ or $|p(1)q(1)|$ for two moving points $p, q \in P$.
As such, we can compute $b(T^*)$ by searching the two sets $S(0)$ and $S(1)$ using our decision algorithm in Theorem~\ref{theo:decision}, where $S(t)$ is defined as $\{|p(t)q(t)| \; | \; p, q \in P\}$ for any $t\in [0,1]$. To avoid explicitly computing $S(0)$ and $S(1)$, which would take $\Omega(n^2)$ time, we resort to the distance selection algorithm of Katz and Sharir~\cite{ref:KatzAn97}, which can compute the $k$-th smallest distance among all interpoint distances of a set of $n$ points in the plane in $O(n^{4/3}\log^2 n)$ time for any $k$ with $1\leq k\leq \binom{n}{2}$. Combining the distance selection algorithm and our decision algorithm, we can compute $b(T^*)$  in $O(n^{4/3}\log^3 n)$ time by doing binary search on the values of $S(0)$ and $S(1)$. The details are given in the proof of the following theorem.

\begin{theorem}
    \label{theorem:ComputeBottleneck}
    Given a set $P$ of $n$ moving points in the plane, we can compute a Euclidean minimum bottleneck moving spanning tree for them in $O(n^{4/3} \log^3 n)$ time.
\end{theorem}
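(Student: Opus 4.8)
The plan is to pin down $b(T^*)$ by binary search over interpoint distances, using the decision algorithm of Theorem~\ref{theo:decision} as an oracle. By Observation~\ref{obser:convex}, the bottleneck of any moving spanning tree equals $\max\{|p(0)q(0)|,|p(1)q(1)|\}$ taken over its edges, so $b(T^*)$ is itself the length of some edge either at $t=0$ or at $t=1$; that is, $b(T^*)\in S(0)\cup S(1)$, where $S(t)=\{|p(t)q(t)|\mid p,q\in P\}$. Moreover the predicate ``$b(T^*)\le\lambda$'' is monotone: it is false for $\lambda<b(T^*)$ and true for $\lambda\ge b(T^*)$. Hence it suffices to compute $\sigma_0$, the smallest element of $S(0)$ for which the decision algorithm answers ``yes'' (set $\sigma_0=+\infty$ if there is none), and likewise $\sigma_1$ for $S(1)$, and then return $b(T^*)=\min\{\sigma_0,\sigma_1\}$. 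Correctness of this last equality is immediate: each of $\sigma_0,\sigma_1$ is either $+\infty$ or an element that is $\ge b(T^*)$, so $\min\{\sigma_0,\sigma_1\}\ge b(T^*)$; and since $b(T^*)$ itself lies in $S(0)\cup S(1)$, it qualifies in at least one of the two sets, forcing the corresponding $\sigma$ to be at most $b(T^*)$.

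Next I would carry out the two searches without ever materializing $S(0)$ or $S(1)$, which would cost $\Theta(n^2)$. Both sets have at most $N=\binom{n}{2}$ elements, and the $k$-th smallest element of $S(t)$ is exactly the $k$-th smallest interpoint distance of $P(t)$, which the distance selection algorithm of Katz and Sharir~\cite{ref:KatzAn97} computes in $O(n^{4/3}\log^2 n)$ time for any $k\in[1,N]$. So to search $S(0)$ I maintain a rank interval inside $[1,N]$; at each step I take the median rank $k$, compute the corresponding distance $d$ in $O(n^{4/3}\log^2 n)$ time, run the $O(n^{4/3}\log^2 n)$-time decision algorithm of Theorem~\ref{theo:decision} with $\lambda=d$, and shrink the interval according to the answer. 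After $O(\log N)=O(\log n)$ steps this yields $\sigma_0$, and the same procedure on $P(1)$ yields $\sigma_1$; each search runs in $O(n^{4/3}\log^3 n)$ time, and taking $\lambda^*=\min\{\sigma_0,\sigma_1\}=b(T^*)$ costs nothing further.

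To also produce an optimal tree, not merely its value, I would invoke the decision algorithm one last time with $\lambda=\lambda^*$; by Theorem~\ref{theo:decision} it returns a moving spanning tree $T$ with $b(T)\le\lambda^*=b(T^*)$, and minimality of $T^*$ forces $b(T)=b(T^*)$, so $T$ is a moving-EMBST. This adds $O(n^{4/3}\log^2 n)$ time. Summing the two binary searches and the final recovery step gives the claimed $O(n^{4/3}\log^3 n)$ bound.

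I do not expect a genuine obstacle here: the heavy machinery lives in Theorem~\ref{theo:decision} and in the cited distance-selection result, and this argument only glues them together. The two points that warrant care are (i) the monotonicity of the decision predicate together with the identity $b(T^*)=\min\{\sigma_0,\sigma_1\}$, both of which follow cleanly from Observation~\ref{obser:convex}, and (ii) the degenerate case $b(T^*)=0$ (for instance $n=1$, or all moving points coinciding throughout the interval), which Theorem~\ref{theo:decision} does not formally cover since it assumes $\lambda>0$; this case is detected directly by testing whether the intersection graph $G_{0}$ is connected, before any oracle call.
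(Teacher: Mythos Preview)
Your proposal is correct and follows essentially the same approach as the paper: binary search over $S(0)$ and $S(1)$ via the Katz--Sharir distance-selection algorithm combined with the decision oracle of Theorem~\ref{theo:decision}, concluding with $b(T^*)=\min\{\sigma_0,\sigma_1\}$ (the paper's $\min\{b_0,b_1\}$) and a final call to recover the tree. Your write-up is in fact more careful than the paper's in justifying monotonicity, the $\min$ identity, and the degenerate case $b(T^*)=0$.
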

\begin{proof}
    We provide the details on searching $b(T^*)$ from $S(0)\cup S(1)$. We first search $S(0)$, which consists of interpoint distances of the points of $P(0)$.

    An interval $(a_0, b_0]$, which is initialized to $(0, \infty]$, is maintained throughout the algorithm. Applying the distance selection algorithm on the points of $P(0)$, we can find the $k$-th smallest distance $d$ of $S(0)$ in $O(n^{4/3} \log^2 n)$ time, with $k=1/2\cdot \binom{n}{2}$. Applying our decision algorithm of Theorem~\ref{theo:decision} with $\lambda=d$, we can decide whether $b(T^*)\leq d$ in $O(n^{4/3} \log^2 n)$ time. Depending on the result, we update the interval $(a_0, b_0]$ accordingly and choose an appropriate value $k$ for the next iteration. In this way, after $O(\log n)$ iterations, we can obtain an interval $(a_0, b_0]$ containing $b(T^*)$ with $a_0, b_0 \in S(0)$ such that no value of $S(0)$ is in $(a_0, b_0)$. The total running time is $O(n^{4/3} \log^3 n)$.

    Following the same idea we search $S(1)$ using the point set $P(1)$, which will produce in $O(n^{4/3} \log^3 n)$ time an interval $(a_1, b_1]$ containing $b(T^*)$ with $a_1, b_1 \in S(1)$ such that no value of $S(1)$ is in $(a_1, b_1)$.

     It is not difficult to see that $b(T^*) = \min \{b_0, b_1\}$. Applying our decision algorithm with $\lambda=b(T^*)$ can produce an optimal moving spanning tree $T^*$. The total time of the algorithm is thus $O(n^{4/3} \log^3 n)$.
\end{proof}

\section{Deletion-only unit-disk range emptiness query data structure}
\label{sec:diskempty}

In this section, we prove Theorem~\ref{theo:query}. We follow the notation in the theorem, e.g., $Q$, $\lambda$.

We use a {\em unit-disk} to refer to a disk with radius $\lambda$.
For any point $p$ in the plane, we use $A_p$ to denote the unit-disk centered at $p$. With this notation, a unit-disk range emptiness (UDRE) query with query point $p$ becomes the following: Determine whether $A_p\cap Q$ is empty, and if not, return a point from $A_p\cap Q$.

We use a grid $\Psi_{\lambda}$ to capture the neighboring information of the points of $Q$, which partitions the plane into square cells of side length $\lambda /
\sqrt{2}$ by horizontal and vertical lines, so that the distance of any two points in
each cell is at most $\lambda$. For ease of
discussion, we assume that each point of $Q$ is in the interior of a cell of
$\Psi_{\lambda}$. Define $Q(C)$ as the subset of points of $Q$ lying in a cell $C$.
A cell $C'$ of $\Psi_{\lambda}$ is a {\em neighbor} of another
cell $C$ if the minimum distance between a point of $C$ and a point of $C'$ is
at most $\lambda$ (see Fig.~\ref{fig:GridPatch}). For each cell $C$, we use $N(C)$ to denote the set
of neighbors of $C$ in $\Psi_{\lambda}$; for convenience, we let $N(C)$ include $C$ itself.
Note that the number of neighbors of each cell
of $\Psi_{\lambda}$ is $O(1)$ and each cell is a neighbor of $O(1)$ cells (since $C'\in N(C)$ if and only if $C\in N(C')$). Let $\mathcal{C}$ denote the set of cells of $\Psi_{\lambda}$ that contain at least one point of $Q$ as well as their neighbors. Note that $\mathcal{C}$ has $O(n)$ cells. By the definition of $\calC$, the following observation is self-evident.

\begin{observation}\label{obser:outsidecell}
For any point $p$ in the plane, if $p$ is not in any cell of $\calC$, then $A_p\cap Q=\emptyset$.
\end{observation}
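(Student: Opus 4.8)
The plan is to prove the contrapositive: if $A_p\cap Q\neq\emptyset$, then $p$ lies in some cell of $\calC$. So suppose there is a point $q\in Q$ with $|pq|\le\lambda$, i.e., $q\in A_p$. First I would locate $q$ in the grid: let $C_q$ be the cell of $\Psi_{\lambda}$ containing $q$, which is well defined and unique because the cells of $\Psi_{\lambda}$ partition the plane. Since $q\in Q$, the cell $C_q$ contains at least one point of $Q$, and therefore, by the very definition of $\calC$, the cell $C_q$ together with all of its neighbors $N(C_q)$ belongs to $\calC$.

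Next I would locate $p$ the same way: let $C_p$ be the cell of $\Psi_{\lambda}$ containing $p$. Since $p\in C_p$ and $q\in C_q$, the minimum distance between a point of $C_p$ and a point of $C_q$ is at most $|pq|\le\lambda$. By the definition of the neighbor relation, this forces $C_p\in N(C_q)$; the only case needing separate mention is $C_p=C_q$, which is covered by the stated convention that $N(C)$ always includes $C$ itself. Hence $C_p$ is a neighbor of a cell containing a point of $Q$, so $C_p\in\calC$ by the definition of $\calC$, meaning $p$ lies in a cell of $\calC$. This contradicts the hypothesis, which completes the proof.

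I do not expect a genuine obstacle here, since the statement is essentially a direct unpacking of how $\calC$ and the neighbor relation were defined. The only point worth a sentence of care is the handling of points on grid lines: because $\Psi_{\lambda}$ is taken to be a partition of the plane (each point belongs to exactly one cell), both $p$ and $q$ have well-defined containing cells and the argument goes through verbatim — in particular, the earlier assumption that points of $Q$ lie in cell interiors is not needed for this claim.
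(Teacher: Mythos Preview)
Your proof is correct and matches the paper's approach: the paper simply declares the observation ``self-evident'' from the definition of $\calC$, and your argument is exactly the unpacking of that definition via the contrapositive. You have supplied the details the paper omits, and there is nothing to add.
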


The grid technique was widely used in algorithms for unit-disk graphs~\cite{ref:WangRe21,
ref:WangNe20, ref:ChanAl16,ref:WangRe22}.
The following lemma has been proved in~\cite{ref:WangUn22}.

\begin{figure}
    \centering
    \includegraphics[width=1.8in]{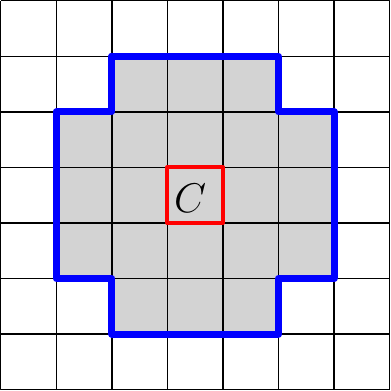}
    \caption{The cells in the gray region bounded by the blue curve are all
	neighbors of the red cell.}
    \label{fig:GridPatch}
\end{figure}


\begin{lemma}\label{lem:grid}{\em(\cite{ref:WangUn22})}
\begin{enumerate}
\item
The set $\calC$, along with the subsets $Q(C)$ and $N(C)$ for all cells $C\in \calC$, can be computed in $O(n\log n)$ time and $O(n)$ space.
\item
With $O(n\log n)$ time and $O(n)$ space preprocessing, given any point $p$ in the plane, we can do the following in $O(\log n)$ time: Determine whether $p$ is in a cell $C$ of $\calC$, and if yes, return $C$ and the set $N(C)$.
\end{enumerate}
\end{lemma}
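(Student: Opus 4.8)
My plan is to realize $\calC$ by a direct bucketing of the points of $Q$ into the cells of $\Psi_{\lambda}$, and to answer the query of part~(2) by plain point location in a sorted list of cell identifiers; no general planar point-location machinery is needed because $\Psi_{\lambda}$ is a regular axis-aligned grid.

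For part~(1), I would first compute, for each point $q\in Q$, the pair of integer indices $(\lfloor \sqrt 2\, q_x/\lambda\rfloor,\lfloor \sqrt 2\, q_y/\lambda\rfloor)$ identifying the cell of $\Psi_{\lambda}$ containing $q$ (call this the \emph{index} of the cell); this takes $O(1)$ time per point. Sorting the points of $Q$ by the lexicographic order of their cell indices takes $O(n\log n)$ time and groups them by cell, which immediately yields the (at most $n$) nonempty cells together with the subset $Q(C)$ for each. Next I would use the fact, implicit in the paper, that each cell of $\Psi_{\lambda}$ has only $O(1)$ neighbors and that the set of admissible neighbor offsets is a fixed $O(1)$-size set independent of the cell: since every cell is a square of side $\lambda/\sqrt2$, a cell that is $i$ columns and $j$ rows away is at distance at least $(\lambda/\sqrt2)\sqrt{\max(0,|i|-1)^2+\max(0,|j|-1)^2}$ from the given cell, which forces $|i|,|j|=O(1)$ whenever this quantity is at most $\lambda$. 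Enumerating, for each nonempty cell, the indices of its $O(1)$ candidate neighbor cells produces $O(n)$ candidate cell indices; sorting these and removing duplicates in $O(n\log n)$ time gives $\calC$, which therefore has $O(n)$ cells. I would store $\calC$ as a sorted array of cell indices, each entry carrying a pointer to the list $Q(C)$ (empty when $C$ is only a neighbor of some nonempty cell). Finally, for each $C\in\calC$ I would generate the indices of its $O(1)$ geometric neighbor cells and binary-search each in the sorted array, spending $O(\log n)$ per cell and $O(n\log n)$ in total, and attach to $C$ the resulting list $N(C)$ (cell indices, with pointers into $\calC$ when present). The total time is $O(n\log n)$, and since $|\calC|=O(n)$, $\sum_C|Q(C)|=n$, and $\sum_C|N(C)|=O(n)$, the space is $O(n)$.

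For part~(2), the preprocessing is nothing more than the sorted array of cell indices (with attached lists $N(C)$) built in part~(1), so it costs $O(n\log n)$ time and $O(n)$ space. Given a query point $p$, I would compute the index of the cell of $\Psi_{\lambda}$ containing $p$ in $O(1)$ time by the same formula, then binary-search that index in the sorted array in $O(\log n)$ time; if it is found I return the cell $C$ and its stored list $N(C)$, otherwise I report that $p$ lies in no cell of $\calC$. (To sidestep the floor operation one may instead locate the column and row of $p$ by binary search among the $O(n)$ distinct vertical and horizontal grid-line coordinates that bound cells of $\calC$, which is still $O(\log n)$ and does not change the bounds.)

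The construction is elementary; the only part that is not pure sorting and binary search is the geometric claim that each cell of $\Psi_{\lambda}$ has $O(1)$ neighbors drawn from a fixed $O(1)$-size set of offsets, which is exactly what makes $|\calC|=O(n)$, keeps $\sum_C|N(C)|=O(n)$, and bounds the per-cell neighbor lookups, and this rests entirely on the choice $\lambda/\sqrt2$ for the cell side length. I do not expect any genuine obstacle beyond stating that bound carefully and handling points on grid lines by a fixed tie-breaking convention.
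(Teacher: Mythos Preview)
The paper does not actually prove this lemma; it is quoted verbatim from~\cite{ref:WangUn22} and used as a black box. Your argument is correct and is essentially the standard grid-bucketing construction one expects behind such a citation: compute integer cell indices in $O(1)$ time per point, sort to group points by cell, enumerate the $O(1)$ neighbor offsets to build $\calC$, and answer queries by computing the index of the query cell and binary-searching a sorted array of indices. The geometric bound you sketch on the neighbor offsets (forcing $|i|,|j|=O(1)$) is exactly the routine computation that underlies the paper's unproved assertion that $|N(C)|=O(1)$, and your remark about avoiding the floor function via binary search on the $O(n)$ relevant grid lines is a standard precaution in the real-RAM model. There is nothing to correct; your write-up simply supplies the elementary details the paper delegates to the reference.
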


Note that we do not compute the entire grid $\Psi_{\lambda}$ but only compute the information in Lemma~\ref{lem:grid}.
We next prove Theorem~\ref{theo:query} using the information computed in Lemma~\ref{lem:grid}.

Consider a UDRE query with a query point $p$. By
Lemma~\ref{lem:grid}(2), we can determine whether $p$ is in a cell $C\in\calC$.
If not, by Observation~\ref{obser:outsidecell}, we are done with the query.
Below we assume that $p$ is in a cell $C\in \calC$.
In this case, $A_p\cap Q \neq \emptyset$ if and only if $A_p\cap Q(C')\neq \emptyset$ for a cell $C'\in N(C)$. As such, as $|N(C)|=O(1)$, it suffices to
check for each cell $C'\in N(C)$, whether $A_p\cap Q(C')=\emptyset$. In this way, we reduce our original problem for $Q$ to $Q(C')$. As such, below we construct a data structure $\calD_C(C')$
for $Q(C')$ with respect to $C$. Note that we also need to handle deletions for $Q(C')$.
Depending on whether $C'=C$, there are two cases.

If $C'=C$, then all points of $Q(C')$ are in the disk $A_p$ and
thus we can return an arbitrary point of $Q(C')$ as the answer to the UDRE query.
To support the deletions on $Q(C')$, we build a balanced
binary search tree $T(C')$ for all points of $Q(C')$ sorted by their indices (we
can arbitrarily assign indices to points of $Q$) as our data structure $\calD_C(C')$. In this way, deleting a point from $\calD_C(C')$ can be done in $O(\log n)$ time. Therefore, in the case where
$C'=C$, we can perform each UDRE query and each deletion in $O(\log n)$ time.

In what follows, we assume that $C'\neq C$, which is our main focus. In this
case, $C'$ and $C$ are separated by an axis-parallel line. Without loss of
generality, we assume that they are separated by a horizontal line $\ell$ such
that $C'$ is above $\ell$ and $C$ is below $\ell$. We further assume that $\ell$
contains the upper edge of $C$.
The rest of this section is organized as follows. In Section~\ref{subsec:obser}, we first present some observations which our approach is based on. We describe our preprocessing algorithm for $Q(C')$ in Section~\ref{subsec:pre} while handling the UDRE queries and deletions is discussed in Section~\ref{subsec:operations}. Section~\ref{subsec:summary} finally summarizes everything. In the following, we let $m=|Q(C')|$.


\subsection{Observations}
\label{subsec:obser}

Our basic idea is to maintain the portion $\calU$ inside $C$ of the lower envelope of the unit-disks centered at points of $Q(C')$. Then, $A_p\cap Q(C')\neq\emptyset$ if and only if $p$ is above $\calU$. Determining whether $p$ is above $\calU$ can be easily done by binary search because $\calU$ is $x$-monotone. To handle deletions, we borrow an idea from Hershberger and Suri~\cite{HershbergerAp92} for maintaining the convex hull of a semi-dynamic (deletion-only) set of points in the plane.
To make our approach work, we first present some observations in this subsection.

Recall that $A_q$ denotes a unit-disk centered at point $q$. We use $\partial A_q$ to denote the boundary of $A_q$, which is a unit-circle. Let $\xi_q=\partial A_q\cap C$, i.e., the portion of the circle $\partial A_q$ inside $C$. Note that it is possible that $\xi_q=\emptyset$, in which case either $A_q\cap C=\emptyset$ or $C\subseteq A_q$. If $A_q\cap C=\emptyset$, then $|pq|>\lambda$ holds for all points $p\in C$ and thus $q$ can be ignored from constructing our data structure $\calD_C(C')$. If $C\subseteq A_q$, then $|pq|\leq \lambda$ always holds for all points $p\in C$ and thus we can process all such points $q$ in the same way as the above case $C'=C$. As such, in the following we assume that $\xi_q\neq \emptyset$ for every point $q\in Q(C')$. Because the radius of $A_q$ is $\lambda$ while the side-length of $C$ is $\lambda/\sqrt{2}$, $\xi_q$ consists of at most two arcs of $\partial A_q$. Further,
$\xi_q$ has exactly two arcs only if $\partial A_q$ intersects the lower edge of $C$. For simplicity of discussion, we remove the lower edge from $C$ and make $C$ a bottom-unbounded rectangle (i.e., $C$'s upper edge does not change, its two vertical edges extend downwards to the infinity, and its lower edge is removed); so now $C$ has three edges. In this way, $\xi_q=\partial A_q\cap C$ is always a single arc.

Since $q$ is above the horizontal line $\ell$, which contains the upper edge of $C$, $\xi_q$ must be $x$-monotone. This means the lower envelope $\calU$ of $\Xi=\{\xi_q\ | \ q\in Q(C')\}$ is also $x$-monotone (see Fig.~\ref{fig:LE}). We will show that $\calU$ can be computed in linear time by a Graham's scan style algorithm once the arcs of $\Xi$ are ordered in a certain way. To define this special order, we first introduce some notation below.

\begin{figure}[t]
    \centering
    \begin{minipage}[t]{0.49\textwidth}
    \centering
    \includegraphics[height=1.5in]{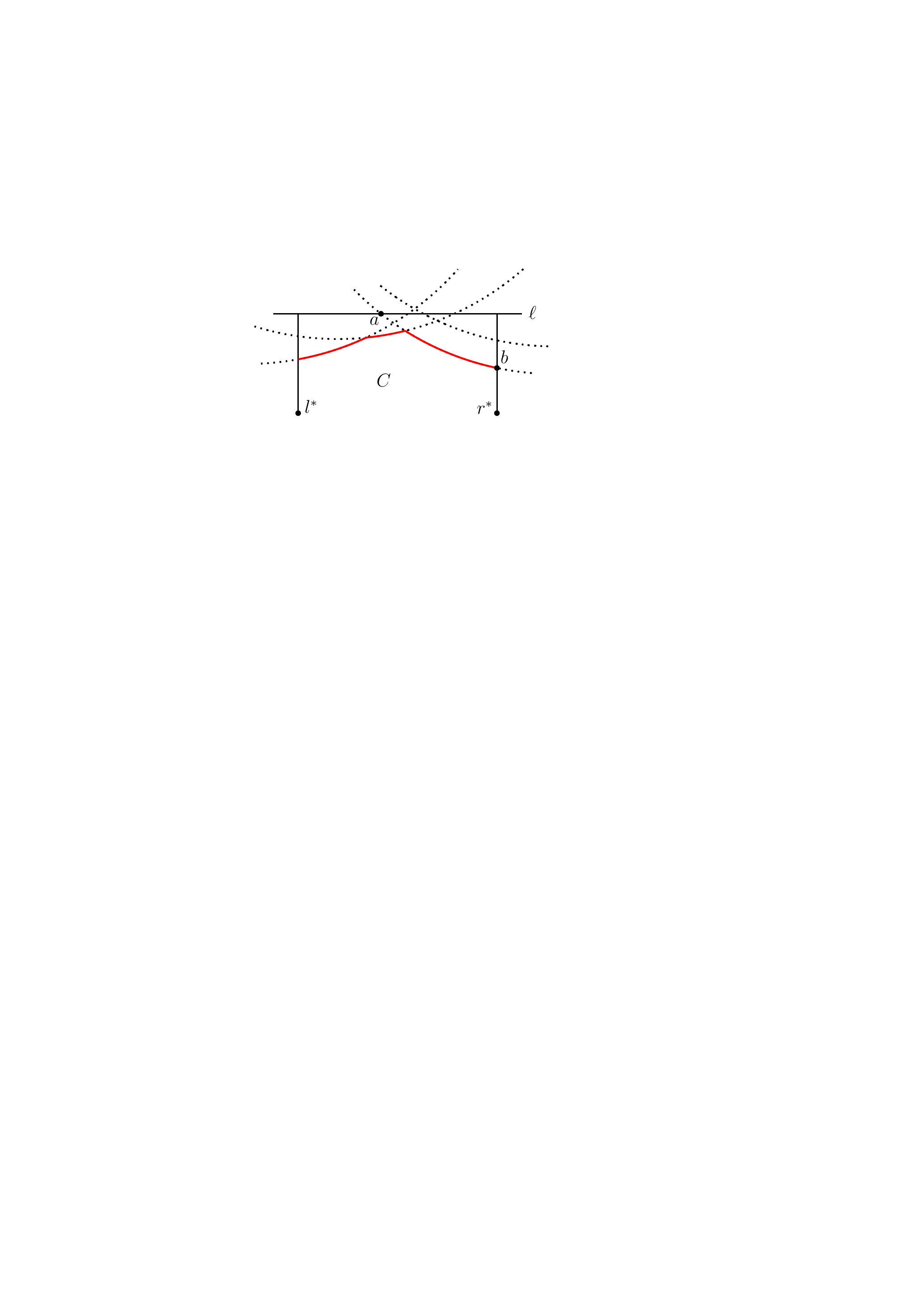}
    \caption{Illustrating the lower envelope (the red curve).}
    \label{fig:LE}
    \end{minipage}%
    \hspace{0.05in}
        \begin{minipage}[t]{0.49\textwidth}
    \centering
    \includegraphics[height=1.35in]{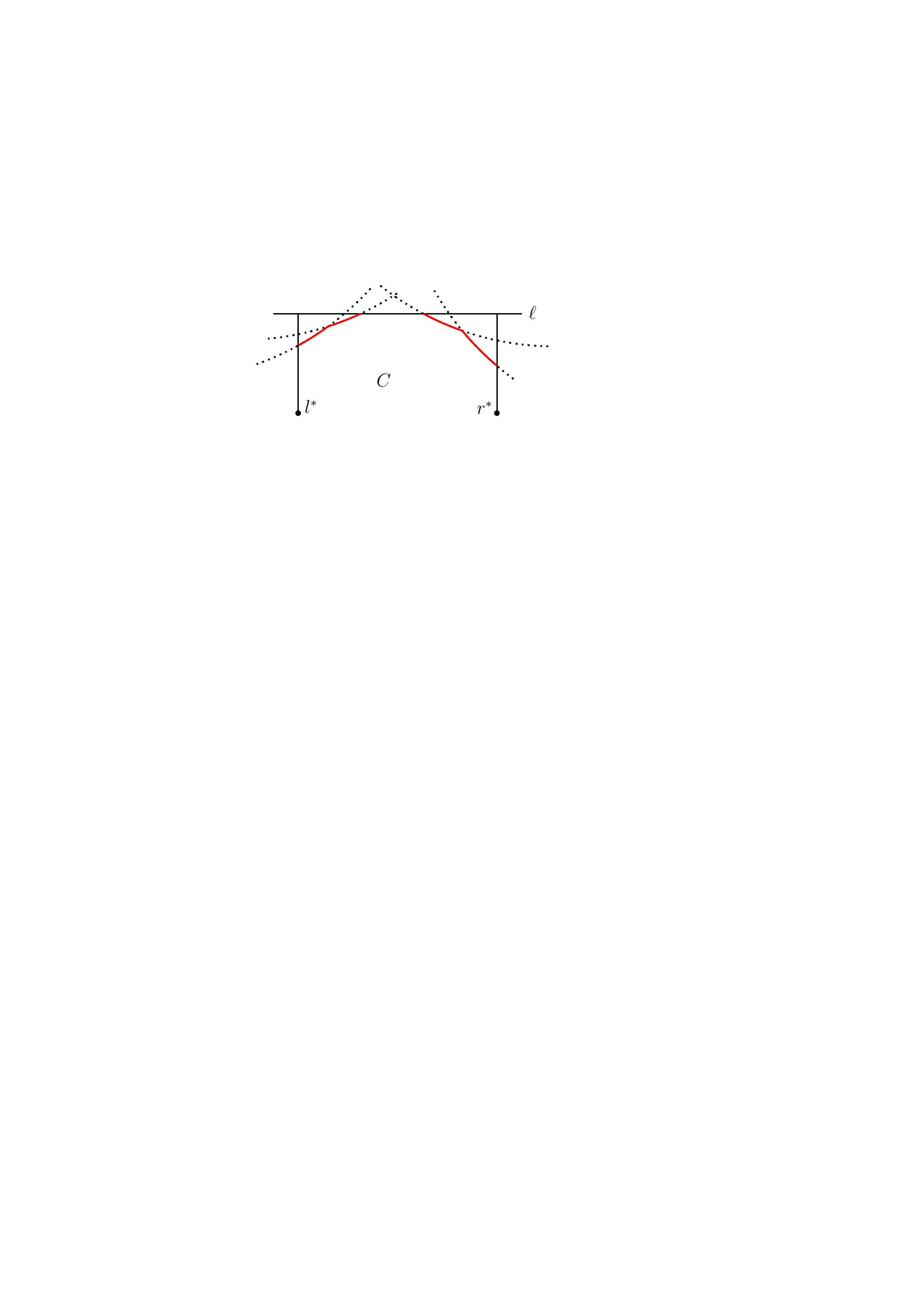}
    \caption{Illustrating a lower envelope (the red curve) that has two connected components.}
    \label{fig:LEmul}
    \end{minipage}%
\end{figure}

Recall that the boundary $\partial C$ consists of three edges. Let $l^*$ denote the lower endpoint of the left edge of $C$ at $-\infty$; similarly, let $r^*$ denote the lower endpoint of the right edge of $C$ (see Fig.~\ref{fig:LE}).
For any two points $a$ and $b$ on $\partial C$, we say that $a$ is {\em left of} $b$ if $a$ is counterclockwise from $b$ around $C$ (i.e., if we traverse from $l^*$ to $r^*$ along $\partial C$, $a$ will be encountered earlier than $b$).
For each arc $\xi_q$, if $a$ and $b$ are its two endpoints and $a$ is left of $b$ (see Fig.~\ref{fig:LE}), then we call $a$ the {\em left endpoint} of $\xi_q$ and $b$ the {\em right endpoint}.
For ease of exposition, we make a general position assumption that no two arcs of $\Xi$ share a common endpoint. The special order mentioned above for the Graham's scan style algorithm is the order of arcs of $\Xi$ by their right endpoints on $\partial C$, called {\em right-endpoint left-to-right order}. To justify the correctness, we prove some properties for the lower envelope $\calU$ below.

Suppose we traverse on $\partial C$ from $l^*$ until we meet $\calU$, and then we traverse on $\calU$ until we come back on $\partial C$ again. We keep traversing. We may meet $\calU$ again if $\calU$ has multiple connected components (see Fig.~\ref{fig:LEmul}). We continue in this way until we arrive at $r^*$. The order of the arcs of $\Xi$ that appear on $\calU$ encountered during the above traversal is called the {\em traversal order} of $\calU$. The following is a crucial lemma that our algorithm relies on.

\begin{lemma}\label{lem:order}
Every arc of $\Xi$ has at most one portion on $\calU$ and the traversal order of
$\calU$ is consistent with the right-endpoint left-to-right order of $\Xi$
(i.e., if an arc $\xi$ appears in the front of another arc $\xi'$ in the
traversal order of $\calU$, then the right endpoint of $\xi$ is to the left of
that of $\xi'$).
\end{lemma}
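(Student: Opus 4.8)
The plan is to reduce both assertions of the lemma to a single geometric fact: \emph{any two arcs of $\Xi$ intersect in at most one point}. Once this is established, the two assertions follow from elementary envelope and ordering arguments, with no Davenport--Schinzel machinery needed.

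I would first prove the single-crossing property. Since $q$ and $q'$ lie strictly above $\ell$ while every point of $C$ has $y$-coordinate at most that of $\ell$, the arc $\xi_q$ lies on the lower semicircle of $\partial A_q$ (the points of $\partial A_q$ with $y$-coordinate at most $q_y$), and likewise for $\xi_{q'}$. The circles $\partial A_q$ and $\partial A_{q'}$ have equal radii, so they meet in at most two points, both lying on the perpendicular bisector $\beta$ of $qq'$ and symmetric about the foot of the perpendicular from $q$ to $\beta$, which is exactly the midpoint $m$ of $qq'$. Hence the $y$-coordinates of the two intersection points average to $m_y=\tfrac12(q_y+q'_y)\ge\min\{q_y,q'_y\}$, so at most one of them has $y$-coordinate at most $\min\{q_y,q'_y\}$; only that one can lie on both lower semicircles, hence on both arcs. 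Thus $\xi_q\cap\xi_{q'}$ has at most one point.

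Since $\calU$ is $x$-monotone, the portions of arcs appearing on $\calU$ are linearly ordered by their $x$-ranges, and the traversal from $l^*$ to $r^*$ meets them in exactly that order; so it suffices to prove (ii) if the portion of an arc $\xi$ lies to the left of the portion of an arc $\xi'$ on $\calU$, then the right endpoint of $\xi$ precedes the right endpoint of $\xi'$ in the $l^*\to r^*$ order along $\partial C$; and (i) no arc contributes two portions to $\calU$. I would deduce (i) from (ii): if an arc $\xi$ had a portion with $x$-range $I_1$ lying left of a second portion with $x$-range $I_2$, then, because $\xi$ is defined on the whole interval between $I_1$ and $I_2$ but is not lowest there, some other arc $\xi'$ realizes $\calU$ at a point strictly between $I_1$ and $I_2$; applying (ii) to the pairs $(I_1,\xi')$ and $(\xi',I_2)$ would force the right endpoint of $\xi$ to both precede and follow that of $\xi'$, a contradiction.

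It remains to prove (ii), which is the technical heart. Each arc $\xi_q$ is $x$-monotone over an $x$-interval $[L_q,R_q]$; its endpoint at abscissa $L_q$ lies on the left edge or the top edge of $C$ and its endpoint at abscissa $R_q$ lies on the top edge or the right edge, so by the left-to-right ordering of $\partial C$ the latter is its right endpoint, and it lies on the top edge if $R_q<x_{\max}$ and on the right edge (at height $\xi_q(x_{\max})$) if $R_q=x_{\max}$. Consequently, the right endpoint of $\xi$ precedes that of $\xi'$ exactly when $R_\xi<R_{\xi'}$, or when $R_\xi=R_{\xi'}=x_{\max}$ and $\xi(x_{\max})>\xi'(x_{\max})$. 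Now assume the portion of $\xi$ lies left of the portion of $\xi'$. If $R_\xi>R_{\xi'}$, then $R_{\xi'}<x_{\max}$, so the right endpoint of $\xi'$ lies on the top edge and $\xi'(R_{\xi'})$ equals the $y$-coordinate of $\ell$, whereas $\xi$ is in the interior of $C$ at abscissa $R_{\xi'}$ and hence strictly lower there; combining this with the facts that $\xi'$ realizes $\calU$ (so $\xi'\le\xi$) at some abscissa at most $R_{\xi'}$ and that $\xi$ realizes $\calU$ (so $\xi\le\xi'$, using the top-edge left endpoint of $\xi'$ when $\xi'$ is undefined at that abscissa) at a still smaller abscissa, the sign of $\xi-\xi'$ changes at least twice, contradicting single-crossing. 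Hence $R_\xi\le R_{\xi'}$; if $R_\xi<R_{\xi'}$ we are done, and if $R_\xi=R_{\xi'}$ the general-position assumption that no two arcs share an endpoint forces $R_\xi=R_{\xi'}=x_{\max}$, where a similar sign-tracking argument, now splitting on whether $\xi$ and $\xi'$ cross, shows that $\xi(x_{\max})\le\xi'(x_{\max})$ would place the portion of $\xi'$ left of that of $\xi$, again a contradiction. I expect the main obstacle to be exactly this case analysis in (ii): one must track whether the two arcs' $x$-ranges overlap, whether each arc is defined at the abscissae in question, and on which edge of $\partial C$ each relevant endpoint lies, invoking general position to rule out coincidences.
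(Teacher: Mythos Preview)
Your proof is correct but follows a different route from the paper. The paper argues by induction on the arcs in right-endpoint order: assuming the statement holds for $\Xi_{i-1}$ and its envelope $\calU_{i-1}$, it inserts $\xi_i$ by walking from its right endpoint $b$ along $\xi_i$ until either the left endpoint $a$ or a first point $c\in\calU_{i-1}$ is reached, and observes (using that two equal-radius circles cross transversally) that beyond $c$ the remainder of $\xi_i$ lies inside the disk of the arc it meets and hence is not on $\calU_i$; in either case $\xi_i$ contributes a single portion at the very end of the traversal order. You instead isolate a single geometric fact---any two arcs of $\Xi$ meet in at most one point, proved cleanly via the midpoint symmetry of equal-radius circle intersections and the observation that both arcs lie on lower semicircles---and then derive both conclusions by a direct sign-tracking case analysis on $\xi-\xi'$. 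Your approach is more modular: the single-crossing property is exactly what later drives Lemma~\ref{lem:intersect}, and making it explicit clarifies why the envelope-tree machinery behaves like the line case in Hershberger--Suri. The paper's inductive argument, by contrast, is essentially a Graham-scan correctness proof and thus simultaneously justifies an incremental linear-time construction of $\calU$. Your case analysis in (ii), particularly the branch $R_\xi=R_{\xi'}=x_{\max}$, is only sketched; the outline is sound, but it would benefit from being written out in full.
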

\begin{proof}
 We prove the lemma by induction.
Let $\xi_1,\xi_2,\ldots,\xi_m$ be the arcs of $\Xi$ following the right-endpoint left-to-right order. For each $i=1,2,\ldots,m$, let $\Xi_i=\{\xi_1,\xi_2,\ldots,\xi_i\}$ and $\calU_i$ denote the lower envelope of $\Xi_i$. We assume that the lemma statement holds for $\Xi_{i-1}$ and $\calU_{i-1}$, i.e., every arc of $\Xi_{i-1}$ has at most one portion on $\calU_{i-1}$ and the traversal order of $\calU_{i-1}$ is consistent with the right-endpoint left-to-right order of $\Xi_{i-1}$, which is true when $i=2$. Next we prove that the lemma statement holds for $\Xi_i$ and $\calU_i$. For each $\xi_i$, we use $A_i$ to denote the unit-disk that has $\xi_i$ on its boundary.

We add $\xi_i$ to $\calU_{i-1}$ since $\calU_i$ is the lower envelope of
$\xi_i$ and $\calU_{i-1}$. Let $a$ and $b$ be the left and right endpoints of
$\xi_i$, respectively. Because the right endpoints of all
arcs of $\Xi_{i-1}$ are left of $b$, $b$ must be on $\calU_i$ and actually is the last point
of $\calU_i$ in the traversal order. Imagine that we move on
$\xi_i$ from $b$ until we encounter either $\calU_{i-1}$ or $a$, whichever first.

\begin{enumerate}
\item
If we encounter $a$ first, then $\xi_i$ does not intersect $\calU_{i-1}$
and thus the entire $\xi_i$ is on $\calU_i$ (see Fig.~\ref{fig:case10}). Also, $\xi_i$ is the last arc in the traversal order of $\calU_i$ because $b$ is the last point in the traversal.
Therefore, the lemma statement holds for $\Xi_i$ and $\calU_i$. Note that it is possible
that some components of $\calU_{i-1}$ are covered by $\xi_i$, i.e., they are
inside the disk $A_i$, in which case those components are not part of $\calU_i$
anymore (see Fig.~\ref{fig:case10}).

\begin{figure}[t]
    \centering
    \begin{minipage}[t]{0.49\textwidth}
    \centering
    \includegraphics[height=1.35in]{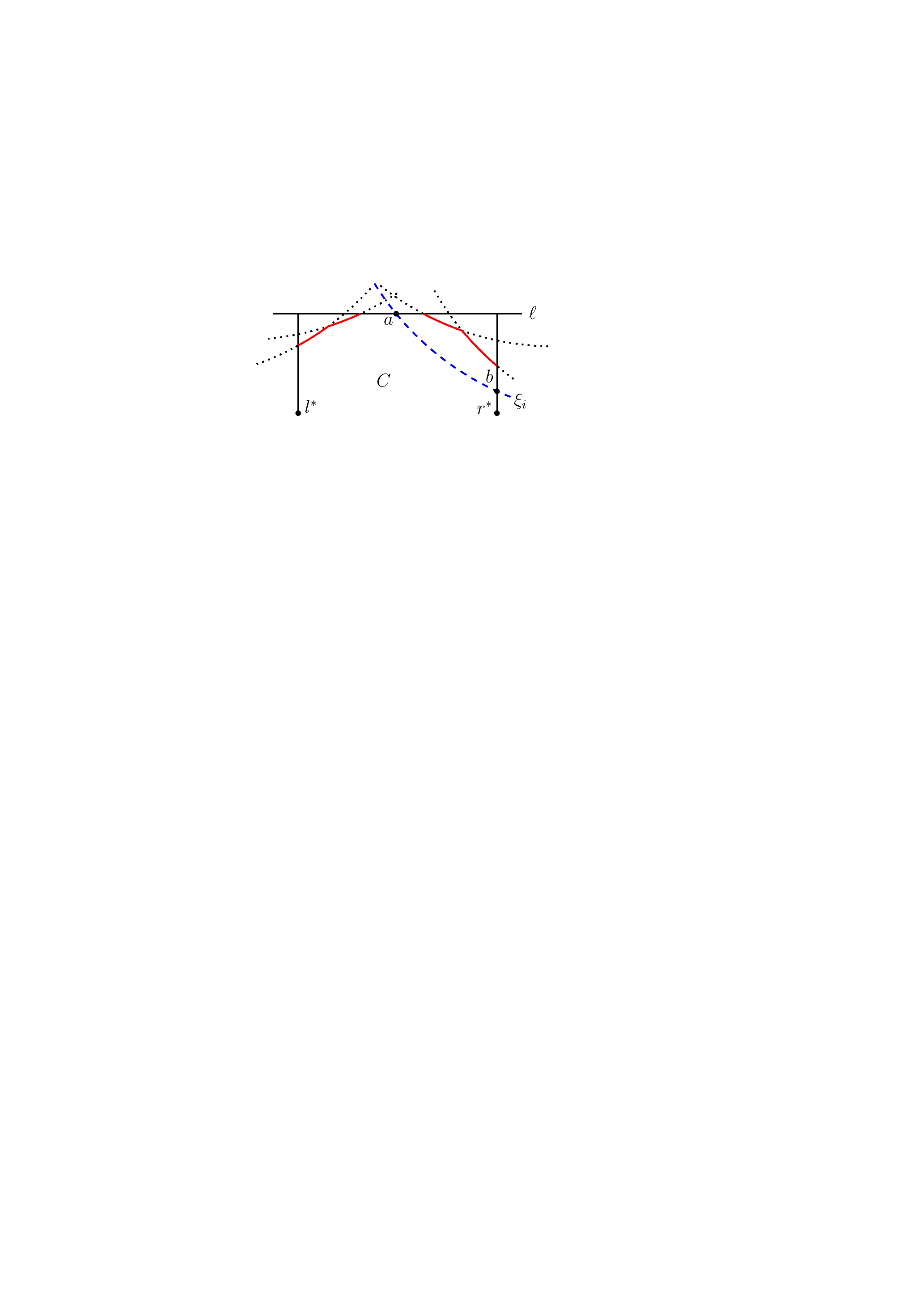}
    \caption{Illustrating the first case in the proof of Lemma~\ref{lem:order}.}
    \label{fig:case10}
    \end{minipage}%
    \hspace{0.05in}
        \begin{minipage}[t]{0.49\textwidth}
    \centering
    \includegraphics[height=1.5in]{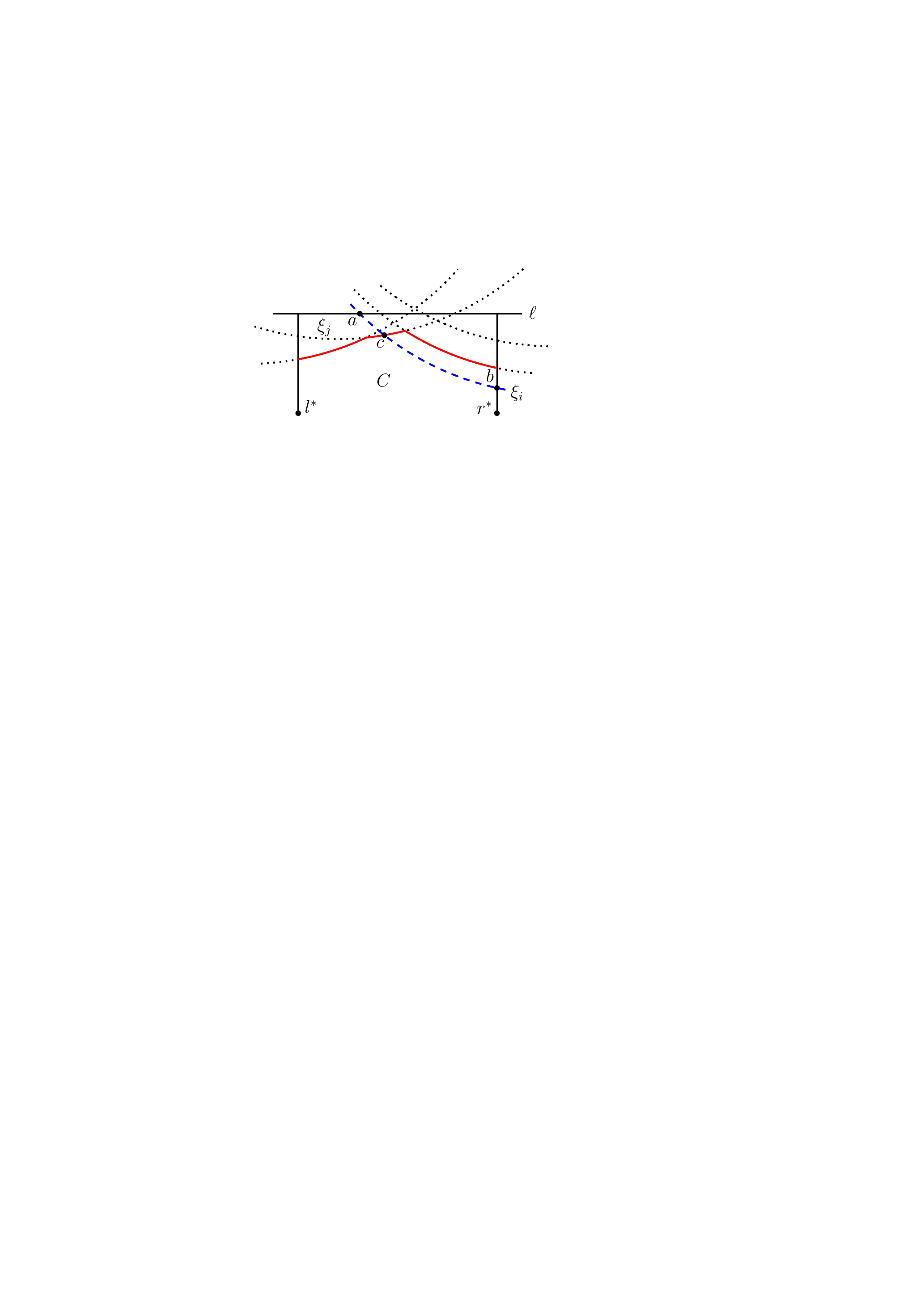}
    \caption{Illustrating the second case in the proof of Lemma~\ref{lem:order}.}
    \label{fig:case20}
    \end{minipage}%
\end{figure}

\item
If we encounter $\calU_{i-1}$ first, say, at a point $c$ (see Fig.~\ref{fig:case20}), then let $\xi_j$ be the arc
of $\Xi_{i-1}$ that contains $c$. This means that $\xi_i$ and $\xi_{j}$
intersect at $c$. Due to our general position assumption, $c$ is not an endpoint
of either arc. As the two arcs have the same radius, $\xi_i$ and $\xi_{j}$ cross
each other at $c$. Also, the portion of $\xi_i$ between $a$ and $c$ is covered
by $\xi_j$, i.e., they are inside the disk $A_j$, and thus cannot be on
$\calU_i$ (see Fig.~\ref{fig:case20}). On the other hand, by the definition of $c$, the portion of $\xi_i$
between $c$ and $b$ is part of $\calU_i$ and is actually the last arc in the
traversal order of $\calU_i$ because $b$ is the last point in the traversal.
Therefore, the lemma statement holds for $\Xi_i$ and
$\calU_i$. Note that the portion of $\calU_{i-1}$ between $c$ and its last
point is covered by $\xi_i$, i.e., inside the disk $A_i$, and thus is not part
of $\calU_i$ anymore (see Fig.~\ref{fig:case20}).
\end{enumerate}

The above proves that the lemma holds for $\Xi_i$ and
$\calU_i$.
\end{proof}

\subsection{Preprocessing}
\label{subsec:pre}

We perform the following preprocessing algorithm
for $Q(C')$. Due to Lemma~\ref{lem:order}, we are able to extend to our problem a technique from Hershberger and Suri~\cite{HershbergerAp92} for maintaining the convex hull for a semi-dynamic (deletion-only) set of points in the plane (in the dual plane, the problem is to maintain the lower/upper envelope for a semi-dynamic set of lines).
Recall that $m=|Q(C')|$.

We first compute the arcs of $\Xi$ and sort them by their right endpoints from
left to right on $\partial C$. Let $T$ be a complete binary tree whose
leaves correspond to arcs in the above order.
For each node $v$, let $\Xi(v)$ denote the subset of arcs in the leaves of the
subtree of $T$ rooted at $v$.


For any subset $\Xi'$ of $\Xi$, let $\calU(\Xi')$ denote the lower envelope of the arcs of $\Xi'$.
We use a tree $T(\Xi')$ (which can be
considered as a subtree of
$T$) to represent $\calU(\Xi')$. Initially, we have the tree
$T(\Xi)$, and later $T(\Xi)$ is modified due to point deletions from $Q(C')$ (and correspondingly arc deletions from $\Xi$).
The tree $T(\Xi')$ is defined as follows. For each arc $\xi \in \Xi'$, we copy
the leaf of $T$ storing $\xi$ along with all ancestors of the leaf into $T(\Xi')$.
If we define $\Xi'(v) = \Xi(v) \cap \Xi'$ for any node $v$ of $T$, then $v$ is copied into $T(\Xi')$
if and only if $\Xi'(v)\neq \emptyset$. Later we will add some additional node-fields to
$T(\Xi')$ to represent the lower envelope $\calU(\Xi')$.
We call $T(\Xi')$ an \emph{envelope	tree}.

We wish to have each node $v$ of $T(\Xi')$ represent the lower
envelope $\calU(\Xi'(v))$ of arcs of $\Xi'(v)$, i.e., arcs stored in the leaves
of the subtree of $T(\Xi')$ rooted at $v$.
We add a node-field $arcs(v)$ for that purpose.
Storing the entire lower envelope $\calU(\Xi'(v))$ at each $arcs(v)$ of
$T(\Xi')$ leads to superlinear total space. To achieve $O(m)$ space, we use
the following standard approach (which has been used elsewhere, e.g.,~\cite{HershbergerAp92,ref:OvermarsMa81}): For each arc $\xi$ stored in a leave $v \in T(\Xi')$, $\xi$ is stored only at $arcs(u)$ for the highest ancestor $u$ of $v$ in $T(\Xi')$ such that $\xi$ contributes an arc in the lower envelope $\calU(\Xi'(u))$. Arcs of $arcs(v)$ in each node $v$ of $T(\Xi')$ are stored in a doubly linked list. Note that if $v$ is the root of $T(\Xi')$, then $arcs(v)$ stores the whole lower envelope $\calU(\Xi')$ of $\Xi'$.



\begin{figure}
    \centering
    \includegraphics[height=1.3in]{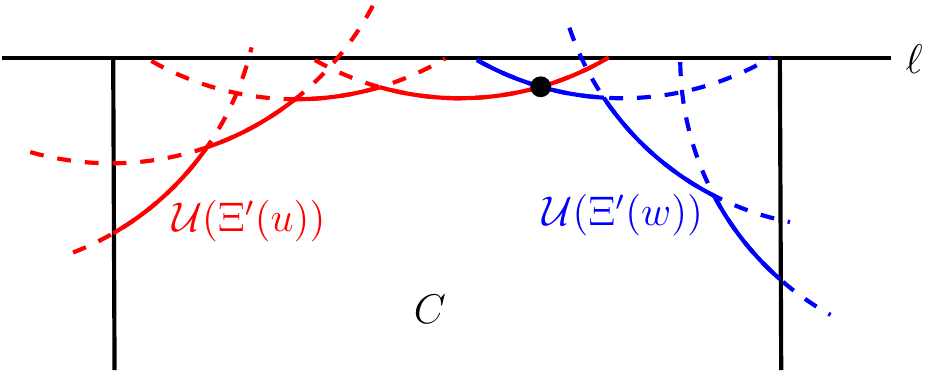}
    \caption{Illustrating Lemma~\ref{lem:intersect}: The red (resp., blue) arcs are those from $\Xi'(u)$ (resp., $\Xi'(w)$). There is only one intersection between $\calU(\Xi'(u))$ and $\calU(\Xi'(w))$.}
    \label{fig:TwoChildren}
\end{figure}

The following lemma, which can be easily obtained from Lemma~\ref{lem:order}, is crucial to the success of our approach.

\begin{lemma}
    \label{lem:intersect}
    For each node $v \in T(\Xi')$, the lower envelopes $\calU(\Xi'(u))$ and $\calU(\Xi'(w))$ have at most one intersection, where $u$ and $w$ are the left and right children of $v$, respectively (see Fig.~\ref{fig:TwoChildren}).
\end{lemma}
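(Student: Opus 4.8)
The plan is to prove Lemma~\ref{lem:intersect} as an almost immediate corollary of Lemma~\ref{lem:order}, by exhibiting a single total right-endpoint order on all arcs of $\Xi'$ that simultaneously refines the orders relevant to $u$, $w$, and $v$. First I would observe that since $u$ and $w$ are the left and right children of a node $v$ in the (static) complete tree $T$, and the leaves of $T$ are arranged from left to right by right endpoint on $\partial C$, every arc of $\Xi'(u)$ has its right endpoint strictly left (along $\partial C$, in the sense of the "left of" order defined above) of every arc of $\Xi'(w)$. Consequently, if we list the arcs of $\Xi'(v) = \Xi'(u) \cup \Xi'(w)$ in right-endpoint left-to-right order, all arcs of $\Xi'(u)$ come first, followed by all arcs of $\Xi'(w)$; that is, $\Xi'(u)$ is a \emph{prefix} and $\Xi'(w)$ a \emph{suffix} of this order.

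Next I would apply Lemma~\ref{lem:order} three times, to $\Xi'(u)$, $\Xi'(w)$, and $\Xi'(v)$. For $\calU(\Xi'(u))$: by the prefix structure above and the inductive construction inside the proof of Lemma~\ref{lem:order}, the last arc of $\calU(\Xi'(u))$ in traversal order is the one whose right endpoint is rightmost among $\Xi'(u)$, and its right endpoint $b_u$ is actually on $\calU(\Xi'(u))$ as its terminal point. Similarly $\calU(\Xi'(w))$ ends at the right endpoint $b_w$ of the arc of $\Xi'(w)$ with rightmost right endpoint, and $b_u$ is left of every right endpoint of $\Xi'(w)$, hence left of $b_w$. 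The key structural fact I would extract is monotonicity: because each $\xi_q$ is $x$-monotone and $b_u$ lies on $\partial C$ to the left of (and ``below,'' in the bottom-unbounded geometry) the entire arc family $\Xi'(w)$, the lower envelope $\calU(\Xi'(w))$ lies entirely to the right of $b_u$ in $x$-coordinate — more precisely, every point of $\Xi'(w)$ has $x$-coordinate at least that of $b_w$'s predecessor structure; I would make this precise using that the right endpoint of every arc in $\Xi'(w)$ is right of $b_u$, combined with $x$-monotonicity of the individual arcs and of the envelopes. Thus $\calU(\Xi'(u))$ and $\calU(\Xi'(w))$ overlap only in a bounded $x$-range, and within that range each is a single $x$-monotone curve.

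Given that both $\calU(\Xi'(u))$ and $\calU(\Xi'(w))$ are $x$-monotone curves, an intersection between them corresponds to a crossing between a single arc $\xi \in \Xi'(u)$ contributing to $\calU(\Xi'(u))$ and a single arc $\xi' \in \Xi'(w)$ contributing to $\calU(\Xi'(w))$; since all arcs have the same radius $\lambda$, two distinct such circular arcs cross at most once, and — this is where Lemma~\ref{lem:order} applied to $\Xi'(v)$ does the real work — in the lower envelope $\calU(\Xi'(v))$ of the \emph{combined} family, each arc contributes at most one portion and the traversal order agrees with the right-endpoint order, so at most one ``handover'' can occur at the boundary between the $\Xi'(u)$-prefix and the $\Xi'(w)$-suffix. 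I would argue that this single handover point is exactly the unique intersection of $\calU(\Xi'(u))$ with $\calU(\Xi'(w))$: to the left of it $\calU(\Xi'(v))$ coincides with $\calU(\Xi'(u))$, to the right with $\calU(\Xi'(w))$, and if the two envelopes crossed twice, $\calU(\Xi'(v))$ would have to switch between them twice, forcing some arc to reappear on $\calU(\Xi'(v))$ — contradicting the first clause of Lemma~\ref{lem:order}.

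The main obstacle I anticipate is not any deep new idea but making the ``monotone curves cross at most once'' argument airtight in the presence of the possibly multiple connected components of each envelope (Fig.~\ref{fig:LEmul}) and the bottom-unbounded cell geometry: I need to rule out, cleanly, the scenario where $\calU(\Xi'(u))$ and $\calU(\Xi'(w))$ are each multi-component and interleave so as to cross several times. The right way to kill this is to lean entirely on Lemma~\ref{lem:order} applied to $\Xi'(v)$ rather than reasoning about the two sub-envelopes in isolation: the prefix/suffix split means the traversal of $\calU(\Xi'(v))$ first runs through (a sub-sequence of) the $\Xi'(u)$-arcs in their envelope order, then switches once to the $\Xi'(w)$-arcs, and the ``at most one portion per arc'' clause forbids switching back; translating this single switch in $\calU(\Xi'(v))$ into a single intersection of $\calU(\Xi'(u))$ and $\calU(\Xi'(w))$ is the one place where I would write the details carefully, using that any point of $\calU(\Xi'(u))$ not on $\calU(\Xi'(v))$ lies strictly inside some disk $A_{q}$ with $q \in \Xi'(w)$, i.e. strictly above $\calU(\Xi'(w))$, and symmetrically.
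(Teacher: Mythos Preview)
Your overall approach is the same as the paper's: invoke Lemma~\ref{lem:order} on $\Xi'(v)$ together with the observation that, by the leaf ordering of $T$, $\Xi'(u)$ forms a prefix and $\Xi'(w)$ a suffix of the right-endpoint order on $\Xi'(v)$. The paper's proof is essentially the one-line contradiction you gesture at in your third paragraph.

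However, there is a slip in how you derive the contradiction. You write that two crossings would force ``some arc to reappear on $\calU(\Xi'(v))$,'' contradicting the first clause (``at most one portion per arc'') of Lemma~\ref{lem:order}, and in the last paragraph you likewise say the at-most-one-portion clause ``forbids switching back.'' This is not right: two crossings yield a pattern $\xi_1,\xi_2,\xi_3$ in traversal order with $\xi_1,\xi_3$ from one side and $\xi_2$ from the other, but there is no reason $\xi_1$ and $\xi_3$ must be the same arc, so no arc need reappear. The contradiction actually comes from the \emph{second} clause of Lemma~\ref{lem:order} (traversal order consistent with right-endpoint order): since every right endpoint in $\Xi'(u)$ is left of every right endpoint in $\Xi'(w)$, a $u$--$w$--$u$ (or $w$--$u$--$w$) pattern in the traversal directly violates that consistency. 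Once you fix this, the argument is exactly the paper's, and everything in your second paragraph about $b_u$, $b_w$, overlapping $x$-ranges, and single-arc crossings is unnecessary.
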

\begin{proof}
Note that $\calU(\Xi'(v))$ is also the lower envelope of $\calU(\Xi'(u))$ and
$\calU(\Xi'(w))$. Assume to the contrary that $\calU(\Xi'(u))$ and
$\calU(\Xi'(w))$ have two or more intersections. Then, $\calU(\Xi'(v))$ has
three arcs $\xi_1$, $\xi_2$, and $\xi_3$ following the traversal order such that
both $\xi_1$ and $\xi_3$ are from one of the two subsets $\Xi'(u)$ and $\Xi'(w)$ while $\xi_2$
is from the other. This implies that the traversal order of $\calU(\Xi'(v))$
is not consistent with the right-endpoint left-to-right order of $\Xi'(v)$
because right endpoints of all arcs of $\Xi'(u)$ are left of the right endpoints
of all arcs of $\Xi'(w)$, a contradiction to Lemma~\ref{lem:order}.
\end{proof}


By Lemma~\ref{lem:intersect}, we add another node-field $X(v)$ for each node $v \in T(\Xi')$ to store the two arcs that define the intersection of $\calU(\Xi'(u))$ and $\calU(\Xi'(w))$, where $u$ and $w$ are the left and right children of $v$ in $T(\Xi')$, respectively. If $\calU(\Xi'(u))$ and $\calU(\Xi'(w))$ do not intersect, then $X(v)$ stores the rightmost arc of $\calU(\Xi'(u))$ and the leftmost arc of $\calU(\Xi'(w))$. As will be seen later in Section~\ref{subsec:operations}, the two node-fields $X(v)$ and $arcs(v)$ in $T(\Xi')$ allow us to efficiently maintain the envelope tree $T(\Xi')$ subject to deletions of arcs. We next have the following lemma for constructing $T(\Xi)$ initially.


\begin{lemma}
    \label{lem:BuildEnvelopeTree}
    Given the set $\Xi$ of $m$ arcs, we can build the envelope tree $T(\Xi)$ in $O(m \log m)$ time.
\end{lemma}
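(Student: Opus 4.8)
The plan is to build the envelope tree $T(\Xi)$ bottom-up, computing the node-fields $arcs(v)$ and $X(v)$ for every node $v$ by merging the information already available at its two children. First I would compute all arcs of $\Xi$ and sort them by their right endpoints along $\partial C$ in $O(m\log m)$ time, creating the complete binary tree $T$ whose leaves store the arcs in this order; at each leaf $v$ we set $arcs(v)$ to be the single arc it stores. Then, processing nodes in order of non-decreasing height, for each internal node $v$ with children $u$ and $w$ I would merge the lower envelopes $\calU(\Xi(u))$ and $\calU(\Xi(w))$ into $\calU(\Xi(v))$. By Lemma~\ref{lem:intersect} these two $x$-monotone curves cross at most once, and since the right endpoints of all arcs in $\Xi(u)$ lie to the left of all right endpoints of arcs in $\Xi(w)$ (so the traversal order is consistent with the right-endpoint order by Lemma~\ref{lem:order}), the envelope $\calU(\Xi(v))$ consists of a prefix of $\calU(\Xi(u))$ followed by a suffix of $\calU(\Xi(w))$. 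I would locate the single crossing point (or, if the two envelopes are disjoint, the rightmost arc of $\calU(\Xi(u))$ and the leftmost arc of $\calU(\Xi(w))$), store the two defining arcs in $X(v)$, and then \emph{relink the doubly linked lists}: the arcs of $arcs(u)$ to the right of the crossing, and the arcs of $arcs(w)$ to the left of the crossing, are spliced \emph{out} of their children's lists and concatenated to form $arcs(v)$, leaving behind at $u$ and $w$ only the arcs that still contribute to $\calU(\Xi(u))$ and $\calU(\Xi(w))$ but are hidden inside $\calU(\Xi(v))$. This is exactly the standard ``hull tree'' / ``envelope tree'' merge used by Overmars--van Leeuwen~\cite{ref:OvermarsMa81} and Hershberger--Suri~\cite{HershbergerAp92}, adapted to circular arcs.

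The key step is to argue that the crossing of $\calU(\Xi(u))$ and $\calU(\Xi(w))$ can be found in time proportional to the number of arcs that change ownership, so that the total work is $O(m\log m)$. Here I would use a binary-search (``finger search'') procedure analogous to the one for merging two convex hulls: using the $X(\cdot)$ fields, one can in $O(\log m)$ time perform a top-down search in the subtrees of $u$ and $w$ to determine, for a given node, which child contains the relevant portion of its envelope near the crossing; repeating this from both sides narrows down the crossing arc. Concretely, at each step I compare the current candidate arcs from the two envelopes, use the geometry of the two equal-radius circles (two such circles cross at most once within $C$ by the arguments in the proof of Lemma~\ref{lem:order}) to decide which side to advance, and descend one level in the corresponding envelope tree. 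Since each envelope is represented by a balanced tree of depth $O(\log m)$, the crossing is found in $O(\log m)$ time. Charging the $O(\log m)$ search cost to node $v$, and observing that each merge additionally does $O(1)$ pointer updates per level traversed, the total construction time is $\sum_v O(\log m) = O(m\log m)$; the splicing of linked lists is $O(1)$ per spliced segment and each arc is moved out of a child's list at most once along any root-to-leaf path, so the relinking cost is also $O(m\log m)$ overall. The space is $O(m)$ because each arc is stored in exactly one $arcs(v)$ list.

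I expect the main obstacle to be the merge step: verifying that the $X(v)$ fields indeed suffice to guide an $O(\log m)$-time search for the single crossing of two arc lower envelopes, and handling the degenerate subcases cleanly (the two envelopes not intersecting at all; one envelope entirely above or below the other; the crossing falling exactly at an arc endpoint, which is excluded by the general position assumption but must be kept in mind; and the fact that, unlike line envelopes, an arc envelope may have \emph{several connected components} as in Fig.~\ref{fig:LEmul}, so ``prefix'' and ``suffix'' must be understood in the traversal order rather than in $x$-coordinate). Once the invariant ``$arcs(v)$ together with the $X(\cdot)$ fields of descendants encode $\calU(\Xi(v))$, and the traversal order of $\calU(\Xi(v))$ refines the right-endpoint order'' is set up, it is maintained by the merge by Lemmas~\ref{lem:order} and~\ref{lem:intersect}, and the $O(m\log m)$ bound follows by the charging argument above. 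Everything else — sorting, building the skeleton tree $T$, initializing the leaves — is routine.
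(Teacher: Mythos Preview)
Your overall plan is sound and reaches the same $O(m\log m)$ bound, but it takes a heavier route than the paper. The paper does \emph{not} perform an $O(\log m)$ Overmars--van~Leeuwen bridge search at each internal node; instead, at node $v$ it simply runs a left-to-right line sweep over the two children's envelopes in $O(|\Xi(v)|)$ time, locating the (at most one) crossing guaranteed by Lemma~\ref{lem:intersect} by brute scanning. Since the sets $\Xi(v)$ at any fixed level of $T$ partition $\Xi$, each level costs $O(m)$, and the $O(\log m)$ levels give $O(m\log m)$ total. This completely sidesteps what you correctly identify as your ``main obstacle'': justifying that a tandem binary search via the $X(\cdot)$ fields actually locates the bridge between two \emph{circular-arc} envelopes in $O(\log m)$ comparisons. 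That case analysis is plausible but is not in the literature for equal-radius arcs, so your route requires genuinely new work that the paper avoids. The trade-off is that your per-merge cost is $O(\log m)$ rather than $O(|\Xi(v)|)$, which would matter in a fully dynamic setting but buys nothing for one-shot construction.

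One concrete slip: your splicing direction is reversed. Because $u$ is the left child, the visible portion of $\calU(\Xi(u))$ on $\calU(\Xi(v))$ is the part to the \emph{left} of the crossing, and the visible portion of $\calU(\Xi(w))$ is the part to the \emph{right}; these are what get concatenated into $arcs(v)$. What stays behind at $arcs(u)$ and $arcs(w)$ are the right tail of $\calU(\Xi(u))$ and the left head of $\calU(\Xi(w))$, respectively---the hidden pieces. Your clause ``leaving behind \ldots\ the arcs that \ldots\ are hidden'' is right, but the preceding clause (moving the right of $u$ and left of $w$ into $arcs(v)$) contradicts it.
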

\begin{proof}
First of all, we can construct the tree $T$ in $O(m\log m)$ time by sorting the arcs of $\Xi$ by their right endpoints on $\partial C$. The rest of the work is thus to compute the fields $arcs(v)$ and $X(v)$ for all  nodes $v$ of $T$. This can be done in a bottom-up manner as follows.

At the outset, we have $arcs(v) = \Xi(v) = \{\xi\}$ for each leaf node $v \in T$, where $\xi$ is the arc stored at $v$. We also set $X(v)$ to null. Next, we compute $arcs(\cdot)$ and $X(\cdot)$ for other nodes by merging the lower envelopes of their children. Specifically, consider a node $v$ whose left and right children are $u$ and $w$, respectively. We assume that $arcs(u)$ and $arcs(w)$ store the lower envelopes $\calU(\Xi(u))$ and $\calU(\Xi(w))$ in their traversal orders, respectively. The first thing is to compute the lower envelope $\calU(\Xi(v))$. By Lemma~\ref{lem:intersect}, $\calU(\Xi(u))$ and $\calU(\Xi(w))$ have at most one intersection. Since each lower envelope is $x$-monotone, $\calU(\Xi(v))$, which is also the lower envelope of $\calU(\Xi(u))$ and $\calU(\Xi(w))$, can be computed by a standard line sweep procedure. Specifically, a vertical sweeping line $\ell'$ sweeps the plane from left to right. During the sweeping, we maintain the two arcs of $\calU(\Xi(u))$ and $\calU(\Xi(w))$ intersecting $\ell'$, respectively. An event happens if $\ell'$ hits a vertex of either $\calU(\Xi(u))$ or $\calU(\Xi(w))$. The sweeping procedure takes $O(|\Xi(v)|)$ time (note that $\Xi(v)=\Xi(u)\cup \Xi(w)$).

\begin{itemize}
  \item
  If $\calU(\Xi(u))$ and $\calU(\Xi(w))$ do not have any intersection, then $\calU(\Xi(v))$ is just the concatenation of $\calU(\Xi(u))$ and $\calU(\Xi(w))$, i.e., we concatenate $arcs(u)$ and $arcs(w)$ and store the result at $arcs(v)$; we also need to reset both $arcs(u)$ and $arcs(w)$ to null. In addition, $X(v)$ is set to including the rightmost arc of $\calU(\Xi(u))$ and the leftmost arc of $\calU(\Xi(w))$.

  \item
  If $\calU(\Xi(u))$ and $\calU(\Xi(w))$ have an intersection, say, $a^*$, then let $\xi_u\in \calU(\Xi(u))$ and $\xi_v\in \calU(\Xi(v))$ be the two arcs that intersect at $a^*$.
We concatenate the part of $\calU(\Xi(u))$ left to $a^*$ and the part of $\calU(\Xi(w))$ right to $a^*$ ($\xi_u$ and $\xi_w$ are cut off at $a^*$); the result is $\calU(\Xi(v))$ and we store it into $arcs(v)$. Further, arcs left to $a^*$ (including $\xi_u$) in $\calU(\Xi(u))$ and arcs right to $a^*$ (including $\xi_w$) in $\calU(\Xi(w))$ are removed from $arcs(u)$ and $arcs(w)$, respectively.
In addition, $X(v)$ is set to $\{\xi_u, \xi_w\}$.
\end{itemize}

As such, computing the node-fields of $v$ takes $O(|\Xi(v)|)$ time. Doing this for all nodes $v$ in the same level of the tree takes $O(m)$ time as the union of $\Xi(v)$ of all nodes $v$ in the same level is exactly $\Xi$. Therefore, the construction of the envelope tree $T(\Xi)$ can be done in $O(m \log m)$ time in total.
\end{proof}

The above finishes our preprocessing for the points $Q(C')$, which takes $O(m\log m)$ time and $O(m)$ space. Our preprocessing builds the envelope tree $T(\Xi)$, which is our data structure $\calD_C(C')$. Once points from $Q(C')$ are deleted we use $\Xi'$ to refer to the subset of $\Xi$ defined by the remaining points and use $T(\Xi')$ to refer to the corresponding envelope tree.

\subsection{Handling UDRE queries and point deletions}
\label{subsec:operations}

We now discuss how to handle the UDRE queries and point deletions.

\paragraph{UDRE queries.}
Handling the UDRE queries is relatively easy. Consider a query point $p$ in the
cell $C$. We wish to determine whether $A_p\cap Q(C')=\emptyset$, and if not,
return a point $q\in A_p\cap Q(C')$. Let $\Xi'$ be the set of arcs defined by
the points in the current set $Q(C')$. As discussed before, it suffices to
determine whether $p$ is above the lower envelope $\calU(\Xi')$. To this end,
since $\calU(\Xi')$ is $x$-monotone, let $a$ and $b$ be the two adjacent
vertices of $\calU(\Xi')$ such that $p$'s $x$-coordinate is between those of $a$
and $b$.
Let $\xi_q$ be the arc that contains the portion of $\calU(\Xi')$ between $a$
and $b$, where $q$ is the center of the arc (and thus $q\in Q(C')$). As such,
$p$ is above $\calU(\Xi')$ if and only if $p$ is above $\xi_q$ (i.e., $p$ is
inside the unit-disk $A_q$). If yes, then $q\in A_p\cap Q(C')$ and thus we can
return $q$ as the answer to the query. Therefore, it suffices to compute the
arc $\xi_q$. To this end, one may attempt to perform binary search on the
vertices of $\calU(\Xi')$ to find $a$ and $b$ first. However, although the whole
$\calU(\Xi')$ is stored in $arcs(v)$ at the root $v$, arcs of $arcs(v)$ are
stored in a doubly linked list, which does not support binary search. To
circumvent the issue, we can actually perform binary search using  the
node-fields $X(\cdot)$ of $T(\Xi')$ as follows. 

Observe that each vertex of $\calU(\Xi')$ appears as the intersection of the two
arcs of $X(v)$ for some node $v\in T(\Xi')$. The subtree of $T(\Xi')$ rooted at
any node $v$ represents $\calU(\Xi'(v))$ by the intersections of the arcs of
$X(\cdot)$ stored at its nodes. To find $\xi_q$, starting from the root, for
each node $v$ of $T(\Xi')$, we compute the intersection $a^*$ of the arcs of
$X(v)$. If the $x$-coordinate of $p$ is smaller or equal to that of $a^*$, we
proceed on the left subtree of $v$ recursively; otherwise, we proceed on the
right subtree. At the end we will reach a leaf and the arc stored at the leaf is $\xi_q$.
As such, $\xi_q$ can be found in $O(\log m)$ time. 

Therefore, each UDRE query can be answered in $O(\log m)$ time.

%

\paragraph{Deletions.}
Next, we discuss point deletions.
To delete a point $q$ from $Q(C')$, it boils down to deleting the arc $\xi_q$ defined by $q$ from the envelope tree $T(\Xi')$. The next lemma provides an algorithm for this.

\begin{lemma}
    \label{lem:delete}
    Deleting an arc from the envelope tree $T(\Xi')$ can be done in $O(\log m)$ amortized time.
\end{lemma}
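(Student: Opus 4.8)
The plan is to delete the arc $\xi_q$ from the envelope tree $T(\Xi')$ by mimicking the deletion procedure of Hershberger and Suri~\cite{HershbergerAp92}, adapted to arcs via Lemma~\ref{lem:order} and Lemma~\ref{lem:intersect}. First I would locate the leaf of $T(\Xi')$ that stores $\xi_q$ and walk up the root-to-leaf path $v_0 = \text{root}, v_1, \ldots, v_k = \text{leaf}$. Along this path, some ancestors $v_j$ may have $\xi_q$ recorded in $arcs(v_j)$ (since $\xi_q$ is stored only at the highest ancestor where it contributes to the lower envelope). Removing $\xi_q$ from the leaf is trivial; the real work is to repair the node-fields $arcs(\cdot)$ and $X(\cdot)$ on the path. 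I would proceed bottom-up from $v_k$ to $v_0$: at each node $v_j$ whose children are $u$ and $w$, I reconstruct $\calU(\Xi'(v_j))$ as the lower envelope of the (already repaired) $\calU(\Xi'(u))$ and $\calU(\Xi'(v_j)\cap\Xi'(w))$. Because of Lemma~\ref{lem:intersect}, these two envelopes cross at most once, so I can find the new crossing point (or the gap between them) by a local search — and here is the standard trick — not by rescanning the whole envelopes, but by "pulling down" arcs that were previously hidden: the arcs of $\calU(\Xi'(u))$ and $\calU(\Xi'(w))$ that reappear on $\calU(\Xi'(v_j))$ are precisely those that were cut off at the old crossing point and stored in descendant $arcs(\cdot)$ lists, so they can be spliced back in by following $X(\cdot)$ pointers.

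The key steps, in order, are: (1) identify the path from the root to the leaf storing $\xi_q$ and, for each node on it, determine from the $X(\cdot)$ fields whether $\xi_q$ currently contributes to that node's represented envelope; (2) at the leaf, delete $\xi_q$ and mark the leaf (and any now-empty ancestors) for removal from $T(\Xi')$; (3) walking up the path, at each internal node $v_j$ on the path re-merge the envelopes of its two children by a "hand-over-hand" search that starts from the old crossing arcs stored in $X(v_j)$ and moves outward, restoring previously hidden arcs from the children's $arcs(\cdot)$ lists into the appropriate nodes and recomputing $X(v_j)$; (4) charge the cost of restoring each such arc to a credit deposited when the arc was last hidden, so that although a single deletion may restore many arcs, the amortized cost over a sequence of deletions is $O(\log m)$ per deletion (each arc, once hidden at a node, is hidden at most once there before being either permanently removed or restored once). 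Finally, fix up the tree structure (remove empty nodes, contract unary nodes if the variant requires it) so that $T(\Xi')$ correctly represents the lower envelope of the remaining arcs.

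I expect the main obstacle to be the amortized analysis of step (3): I must argue that the total number of arc "restorations" (pull-downs from descendant $arcs(\cdot)$ lists back up toward the root) over any sequence of deletions is $O(m\log m)$, so that the amortized cost per deletion is $O(\log m)$. The potential-function argument is the delicate part — I would define a potential counting, for each arc $\xi$ and each node $v$, whether $\xi$ is "buried" at a node strictly below the highest node where it could appear, giving each such burial one unit of credit; a merge at $v_j$ during a deletion spends credit to resurrect buried arcs but only deposits $O(1)$ new credit per node on the path (for the at-most-two arcs newly recorded in the updated $X(v_j)$ and $arcs(v_j)$). One must also check the boundary subtleties specific to our setting: the bottom-unbounded cell $C$ with three edges, the possibility (Lemma~\ref{lem:order}, Fig.~\ref{fig:LEmul}) that $\calU(\Xi')$ has several connected components so that "gaps" along $\partial C$ must be handled, and the invariant that each arc has at most one portion on any envelope, which Lemma~\ref{lem:order} guarantees and which is exactly what makes the single-crossing property of Lemma~\ref{lem:intersect} — and hence the whole Hershberger–Suri machinery — go through unchanged.
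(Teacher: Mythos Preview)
Your proposal is correct and is essentially the same Hershberger--Suri adaptation that the paper carries out: use Lemma~\ref{lem:intersect} to guarantee a single crossing at each internal node, start the local re-merge from the old bridge arcs recorded in $X(\cdot)$, and amortize by observing that arcs only migrate toward the root and hence each arc is touched $O(\log m)$ times over all deletions. One procedural point to tighten: a pure bottom-up walk as you describe does not quite work, because when you stand at $v_j$ the full envelope of the off-path child is not stored there---part of it lives in $arcs(v_j)$ and in ancestors of $v_j$; the paper handles this by a top-down recursion that first \emph{restores} $\calU(\Xi'(u))$ and $\calU(\Xi'(w))$ from $arcs(v)$ and $X(v)$ before recursing, and only then re-merges on the way back up, which is the detail your sketch should make explicit.
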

\begin{proof}
Let $\xi$ be the arc we wish to delete from $T(\Xi')$ and let $z$ be the leaf
node of the tree storing $\xi$. To delete $\xi$, we need
to update $arcs(\cdot)$ and $X(\cdot)$ for all ancestors of $z$.


The algorithm is recursive.
    Starting from the root, for each node $v$, we process it by calling
	Delete$(\xi, v)$ as follows. We assume that $arcs(v)$ now stores the whole lower
	envelope $\calU(\Xi'(v))$, which is true initially when $v$ is the root.
    Let $u$ and $w$ denote the left and right children of $v$, respectively. We
	assume that the leaf $z$ is in the right subtree of $v$ since the other case
	is symmetric. Let $X(v) = \{\xi_u, \xi_w\}$, with $\xi_u \in \calU(\Xi'(u))$
	and $\xi_w \in \calU(\Xi'(w))$, i.e., the intersection of $\xi_u$ and
	$\xi_w$, denoted by $a^*$, is the intersection between
	$\calU(\Xi'(u))$ and $\calU(\Xi'(w))$.
    We first restore $\calU(\Xi'(u))$, by
	concatenating the part of $arcs(v)$ left to $a^*$ and $arcs(u)$. Restoring
	$\calU(\Xi'(w))$ can be done in a similar way. Depending on whether $w=z$,
	there are two cases.

    If $w$ is the leaf $z$ (which is the base case of our recursive algorithm),
	then $arcs(w) = \{\xi\}$ and we reset the right child of $v$ and field
	$X(v)$ to null. We also reset $arcs(v) = arcs(u)$ and $arcs(u)=null$.

    If $w$ is not $z$, then to update $arcs(v)$ and $X(v)$, observe that if
	$\xi\not\in X(v)$, then deleting $\xi$ does not affect the intersection between
	$\calU(\Xi'(u))$ and the new lower envelope $\calU(\Xi'(w)\setminus\{\xi\})$, i.e., $X(v)$ does not change.
	Hence, if $\xi\not\in X(v)$, we proceed on $w$ by calling Delete$(\xi, w)$.
	After Delete$(\xi, w)$ is returned, the new $\calU(\Xi'(w)\setminus\{\xi\})$ is stored in
	$arcs(w)$ and we cut $\calU(\Xi'(u))$ and $\calU(\Xi'(w)\setminus\{\xi\})$ using
	$X(v)$ to obtain $arcs(v)$ in the same way as the tree construction
	algorithm in Lemma~\ref{lem:BuildEnvelopeTree}, which takes $O(1)$ time as
	each $arcs(\cdot)$ is stored by a doubly linked list.
	In the following, we discuss the case where $\xi \in X(v)=\{\xi_u,\xi_w\}$.

\begin{figure}
    \centering
    \includegraphics[height=1.5in]{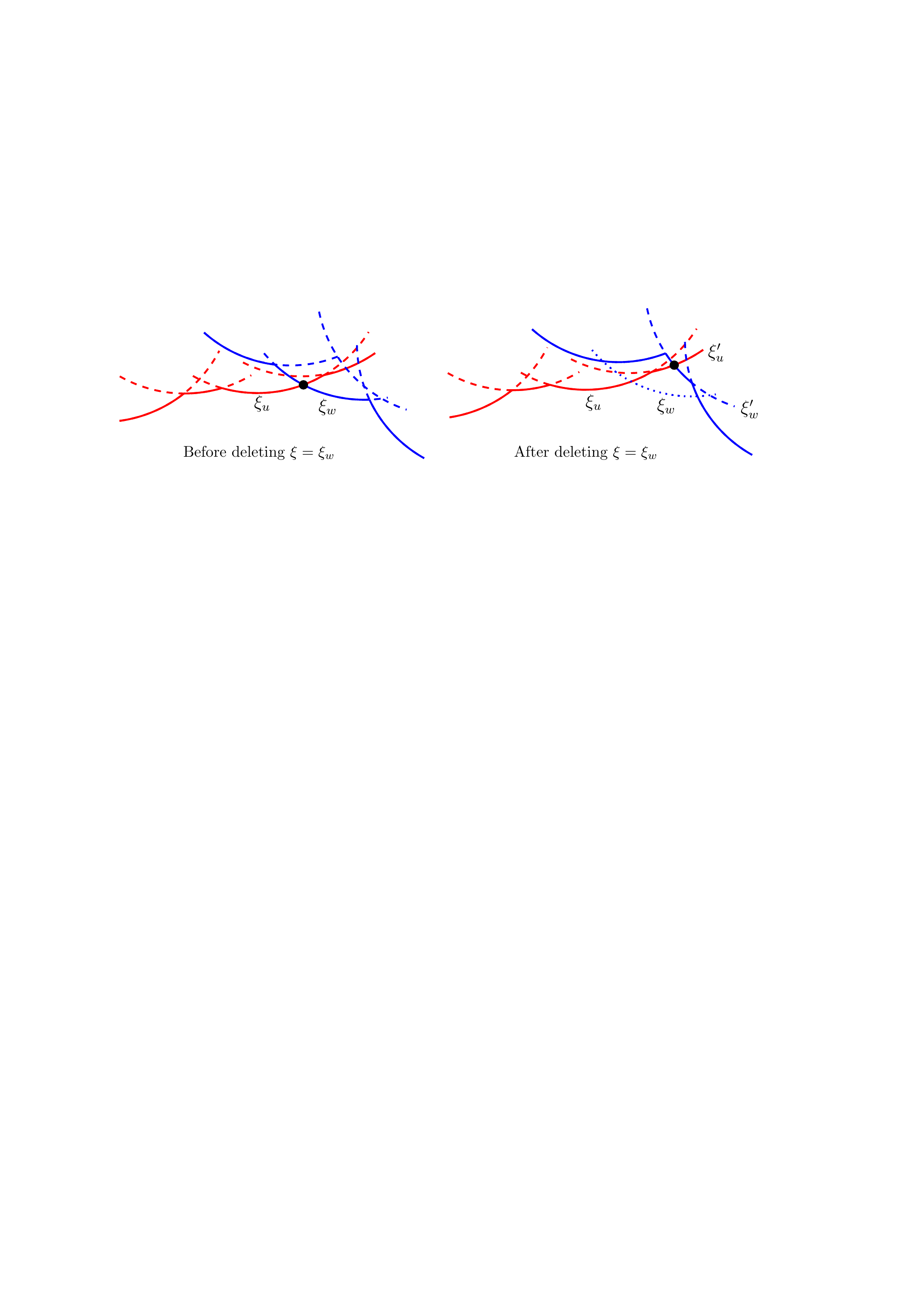}
    \caption{Illustrating the deletion of $\xi=\xi_w$. The red (resp., blue) arcs are those from $\Xi'(u)$ (resp., $\Xi'(w)$).}
    \label{fig:delete}
\end{figure}

    Since $\xi$ is in the right subtree of $v$, $\xi$ must be $\xi_w$. In this
	case, $X(v)$ will be changed after the deletion of $\xi$ and thus we need to
	compute the new arcs that define the intersection of $\calU(\Xi'(u))$ and
	the new lower envelope $\calU(\Xi'(w) \setminus \{\xi\})$ (see Fig.~\ref{fig:delete}). We proceed on $w$ by calling
	Delete$(\xi, w)$. After Delete$(\xi, w)$ is returned, the new
	$\calU(\Xi'(w) \setminus \{\xi\})$ is stored in $arcs(w)$.
    Let $\{\xi'_u, \xi'_w\}$ be the new $X(v)$ to be computed, with $\xi'_v$ and $\xi'_w$ in $\calU(\Xi'(u))$ and $\calU(\Xi'(w) \setminus \{\xi\})$, respectively. Observe that $\xi_u'$ cannot lie to the left of $\xi_u$ in $arcs(u)$ while $\xi'_w$ must lie on the part of the new $\calU(\Xi'(w) \setminus \{\xi\})$ between the two old neighbors of $\xi$ (=$\xi_w$) on $\calU(\Xi'(w))$ (see Fig.~\ref{fig:delete}).
    As such, we compute $\xi'_u$ and $\xi'_w$ using a line sweep procedure that is similar to the algorithm in Lemma~\ref{lem:BuildEnvelopeTree}, but to make the algorithm faster, due to the above observation it suffices to start the sweeping line from the left of the following two arcs: $\xi_u$ and the left neighbor of $\xi$ in the original lower envelope $\calU(\Xi'(w))$. We stop the sweeping once the intersection of $\calU(\Xi'(u))$ and $\calU(\Xi'(w) \setminus \{\xi\})$ is found, after which, we reset $arcs(v)$ as well as $arcs(u)$ and $arcs(w)$ in constant time in a way similar to the algorithm in Lemma~\ref{lem:BuildEnvelopeTree}.

    The pseudocode in
Algorithm~\ref{algorithm:EnvelopeTreeDeletion} summarizes the algorithm.

   \begin{algorithm}[htbp]
        \DontPrintSemicolon
        \caption{Deleting an arc $\xi$ from the envelope tree $T(\Xi')$.}
        \label{algorithm:EnvelopeTreeDeletion}

        \SetKwFunction{FDeletion}{Delete}

        \SetKwProg{Fn}{Function}{:}{end}

        \Fn{\FDeletion{$\xi$, $v$}}
        {
            \tcp{Initially, $v$ is the root of the envelope tree $T(\Xi')$.}
            \tcp{Let $z$ be the leaf that stores $\xi$. We assume that $z$ is in the right subtree of $v$; the other case is symmetric. }
            \tcp{In the beginning of this procedure, $arcs(v)$ stores $\calU(\Xi'(v))$; at the end $arcs(v)$ stores $\calU(\Xi'(v) \setminus \{\xi\})$.}

            $u = v.left\_child$ \;
            $w = v.right\_child$ \;

            Restore $\calU(\Xi'(u))$ and $\calU(\Xi'(w))$ using $arcs(u)$, $arcs(w)$, $arcs(v)$, and $X(v)$.

            \If {$w = z$}
            {
                $v.right\_child = NULL$ \;
                $arcs(v) = arcs(u)$ \;
                $X(v) = NULL$ \;
                $arcs(u) = NULL$ \;
            }
            \Else
            {

                \If {$\xi \in X(v)=\{\xi_u,\xi_w\}$}
                {
                    $\xi_1 = \xi_u$ and $\xi_2$ is set to the left neighbor of $\xi_w$ in $arcs(w)$ \;
                \text{Delete}$(\xi, w)$ \;

                Using the line sweep procedure of Lemma~\ref{lem:BuildEnvelopeTree} (but starting from the left arc of $\xi_1$ and $\xi_2$) to find the intersection of $arcs(u)$ and $arcs(w)$, and then set $X(v)$. \;

                Cut off $arcs(u)$ and $arcs(w)$ at the intersection and
				concatenate the corresponding parts to produce $arcs(v)$ (similar to the algorithm of Lemma~\ref{lem:BuildEnvelopeTree})\;
                }
                \Else
                {
                    \text{Delete}$(\xi, w)$ \;


                    Cut off $arcs(u)$ and $arcs(w)$ at the intersection and
					concatenate the corresponding parts to produce $arcs(v)$ (similar to the algorithm of Lemma~\ref{lem:BuildEnvelopeTree}). \;

                }

            }

        }
    \end{algorithm}


    For the time analysis, the time we spend on each node $v$ is $O(1)$ except the line sweep procedure for computing $\xi'_u$ and $\xi'_w$ in the case where $\xi\in X(v)$. The procedure takes time $O(1+k_u+k_w)$, where $k_u$ is the number of arcs between $\xi_u$ and $\xi_u'$ in $\calU(\Xi'(u))$ and $k_w$ is the number of arcs between $\xi_w$ and $\xi_w'$ in $\calU(\Xi'(w))$. Observe that the arcs between $\xi_u$ and $\xi_u'$ in $\calU(\Xi'(u))$ are moved up from node $u$ to node $v$ after the deletion of $\xi$ (i.e., they were originally stored at $arcs(u)$ but are stored at $arcs(v)$ after the deletion). Similarly, the arcs between $\xi_w$ and $\xi_w'$ in $\calU(\Xi'(w))$ are moved up to $w$ from some lower levels after the deletion (see Fig.~\ref{fig:delete}).
    Because each arc can be moved up at most $O(\log m)$ times for all $m$ point deletions of $Q(C')$, the total sum of $k_u+k_w$ for all deletions is bounded by $O(m\log m)$. As such, each deletion takes $O(\log m)$ amortized time.
\end{proof}

\subsection{Putting everything together}
\label{subsec:summary}

The above shows that we can build a data structure $\calD_C(C')$ for the points of $Q(C')$ with respect to $C$ in $O(m\log m)$ time and $O(m)$ space, such that each UDRE query with a query point in $C$ can be answered in $O(\log m)$ time and deleting a point from $Q(C')$ can be handled in $O(\log m)$ amortized time.

To solve our original problem on $Q$, i.e., proving Theorem~\ref{theo:query}, for each cell $C\in \calC$, we build data structures $\calD_C(C')$ for all cells $C'\in N(C)$. Because $|N(C)|=O(1)$ for every $C\in \calC$ and each cell $C'$ is in $N(C)$ for a constant number of cells $C\in \calC$, the total space for all these data structures $\calD_C(C')$ is $O(n)$ and the total preprocessing time is $O(n\log n)$.

For each UDRE query with a query point $p$, we first use Lemma~\ref{lem:grid}(2) to determine whether $p$ is in a cell of $\calC$. If not, then by Observation~\ref{obser:outsidecell}, $A_p\cap Q=\emptyset$ and thus we are done with the query. Otherwise, Lemma~\ref{lem:grid}(2) will return the cell $C$ that contains $p$ as well as $N(C)$.
Then, for each $C'\in N(C)$, we solve the query using the data structure $\calD_C(C')$. The total query time is $O(\log n)$ as $|N(C)|=O(1)$.

To delete a point $q$ from $Q$, using Lemma~\ref{lem:grid}(2) we first find the cell $C'$ that contains $q$ as well as $N(C')$.
Notice that $N(C')$ exactly consists of those cells $C$ with $C'\in N(C)$.
We then delete $q$ from the data structure $\calD_C(C')$ for each $C\in N(C')$. As $|N(C')|=O(1)$, the total deletion time is $O(\log n)$ amortized time.

This proves Theorem~\ref{theo:query}.

%

\section{Conclusion}
\label{sec:Conclusion}

In this paper, we presented an $O(n^{4/3}\log^3 n)$ time algorithm for computing
a Euclidean minimum bottleneck moving spanning tree for a set of $n$ moving
points in the plane, which significantly improves the previous $O(n^2)$ time
solution~\cite{AkitayaTh21}. To solve the problem, we first solved the decision
problem in $O(n^{4/3}\log^2 n)$ time. This is done by reducing it to the problem
of computing a common spanning tree in two unit-disk graphs. To avoid computing
the unit-disk graphs explicitly, which would cost $\Omega(n^2)$ time,
we used a batched range searching
technique~\cite{ref:KatzAn97} to
obtain a compact representation for searching one graph, and derived a semi-dynamic (deletion-only) unit-disk
range emptiness query data structure for searching the other graph. We believe
our data structure is interesting in its own right and will certainly find
applications elsewhere.
We finally remark that although in our problem each moving point is required to move linearly with
constant velocity, our algorithm still works for other types of point movements
as long as Observation~\ref{obser:convex} holds.

\footnotesize
\bibliographystyle{plain}
\bibliography{reference}
\end{document}